\numberwithin{equation}{section}
\numberwithin{figure}{section}
\numberwithin{table}{section}
\newtheorem{thm}{Theorem}[section]
\newtheorem{lemma}[thm]{Lemma}
\newtheorem{cor}[thm]{Corollary}
\newtheorem{pro}[thm]{Proposition}
\newtheorem{defn}[thm]{Definition}
\newtheorem{rem}[thm]{Remark}
\newcommand{\cal}{\mathcal}
\begin{document}
\title[Set-theoretic YBE,  braces and Drinfeld twists]{Set-theoretic Yang-Baxter equation,  braces  and  Drinfeld twists }
\author[Anastasia Doikou]{Anastasia Doikou}

\address[A. Doikou] {Dept of Mathematics, Heriot-Watt University,
Edinburgh EH14 4AS, and Maxwell Institute for Mathematical Sciences, Edinburgh}
\email{A.Doikou@hw.ac.uk}

%\subjclass[2010]{Primary 16N80, 16P90, 16N40 }
 \keywords{Yang-Baxter equation, braces, quantum groups, Drinfeld twists}

\date{\today}

\begin{abstract}
We consider involutive, non-degenerate, finite set-theoretic solutions of the Yang-Baxter equation.  
Such solutions can be always obtained using certain algebraic structures 
that generalize nilpotent rings called braces. Our main aim here is to express such solutions in terms of admissible 
{\it Drinfeld twists} substantially extending recent preliminary results. 
We first identify the generic form of the twists associated to set-theoretic solutions and  we show 
that these twists are admissible, i.e. they satisfy  a certain co-cycle condition. These findings 
are also valid for Baxterized solutions of the Yang-Baxter equation constructed from the set-theoretical ones.

\end{abstract}

\maketitle

\date{}
\vskip 0.4in

\section{Introduction}

\noindent 
The idea of set-theoretic solutions of the Yang-Baxter equation (YBE \cite{Baxter}) was first suggested by Drinfield 
\cite{Drin} and since then there has been a considerable research activity on this topic from the algebraic point of view 
\cite{Hienta, ESS, Eti}, but also in the context of classical integrable systems \cite{ABS, Veselov, Papag}. 
More recently, an algebraic structure that generalizes nilpotent  rings, called a brace, was introduced \cite{[25], [26]}, to describe 
all finite, involutive, non-degenerate set-theoretic solutions of  the Yang-Baxter equation.   
Rump showed \cite{[25], [26]} that every brace yields a solution to the Yang-Baxter equation, and every 
non-degenerate, involutive set-theoretic  solution of the Yang-Baxter equation can be obtained from a brace.
Subsequently skew-braces were developed in \cite{GV} to describe non-involutive solutions. 
This emerging research area has been particularly fruitful and numerous relevant studies have 
been produced  over the past few years (see for instance
 \cite{bcjo, Catino, [6]}, \cite{Gateva}-\cite{15}, \cite{GV, JKA, SVB, LAA}).

In \cite{DoiSmo1, DoiSmo2} key links between set-theoretical solutions coming from braces and quantum
integrable systems and the associated quantum algebras were uncovered. 
More precisely:
\begin{enumerate}
\item  Quantum groups associated to Baxterized solutions of the Yang-Baxter equations coming 
from braces were derived via the FRT construction \cite{FadTakRes}.
\item  Novel classes of quantum discrete integrable systems with periodic and open boundary conditions were produced.
 \item Symmetries of the periodic and open transfer matrices of the novel integrable systems were identified.
\item Preliminary findings on Drinfeld twists for set-theoretic solutions were presented.
\end{enumerate}
Note that in \cite{ESS} Etingof, Schedler and Soloviev constructed quantum groups
associated to set-theoretic solutions, however in \cite{DoiSmo1} a different construction is used coming
from parameter dependent solutions of the Yang-Baxter equation and thus the corresponding 
quantum groups differ from those in \cite{ESS}.

Our main focus here is the study of involutive  set-theoretic solutions coming from braces, 
based on the findings presented in \cite{DoiSmo1, DoiSmo2}.
More precisely, our aim is to identify the explicit form of so called admissible 
Drinfeld twists for brace solutions of the YBE.
We note that  in Drinfeld's original works on quantum algebras 
\cite{Drinfeld, Drintw} it was required that the universal ${\cal R}$-matrix, 
solution of the Yang-Baxter equation has a semi-classical limit, 
i.e. it can be expressed as formal series expansion,
${\cal R} = {\bf 1} +\sum_{n =1}^{\infty} h^{n} {\cal R}^{(n)}$, where $h$ is some ``deformation'' parameter. 
In \cite{Drintw} Drinfeld introduced the notion of the twist ${\mathfrak F}$, which links distinct solutions 
of the Yang-Baxter equation, and consequently yields distinct quasi triangular Hopf or (quasi) Hopf algebras,
provided that the twist is {\it admissible}. Such twists have also  semi-classical limits \cite{Drintw}, i.e. 
they can be expressed as  formal series expansions in powers of the deformation parameter, 
with the leading term being the identity.

In the analysis of \cite{DoiSmo1, DoiSmo2} {\it Baxterized} $R$-matrices coming from set-theoretic solutions 
of the Yang-Baxter equation were identified,  being of the form
$R(\lambda) = r+ {1\over \lambda} {\cal P}$, where $r$ is the set-theoretic solution of the
Yang-Baxter equation and ${\cal P}$ is the permutation operator. 
Interestingly the $r$-matrix does not contain any free parameter (deformation parameter), 
and consequently the $R$-matrix has no semi-classical analogue. 
This is the pivotal difference between our study here and  Drinfeld's analysis  \cite{Drintw}.
A similar observation can be made about the associated Drinfeld twist, which is explicitly identified in the present study.
To be more precise the main results of this investigation, presented in section 3, are: 
\begin{enumerate}
\item The identification of the explicit form 
of ``local'' (2-site) and ``global'' ($n$-site)  Drinfeld twists.
\item The proof of the admissibility of Drinfeld's twists associated to involutive, non-degenerate, 
set-theoretic solutions of the Yang-Baxter equation. 
\end{enumerate}

\subsection*{The outline of the paper} We describe below what is achieved in each of the subsequent sections.

\begin{itemize}

\item In section 2 we present necessary preliminary notions regarding the set-theoretic solutions of the YBE 
and the associated quantum groups derived via the FRT construction \cite{DoiSmo1}.  More specifically:

\begin{enumerate}

\item In subsection 2.1 we recall the definition of the set-theoretic YBE.  We also recall Rump's theorem, which states that
every involutive, non-degenerate, set-theoretic solution of the YBE comes from a brace and vice versa.

\item In subsection 2.2 we review the construction of the quantum group associated 
to involutive, set-theoretic solutions by means of the FRT formulation.
\end{enumerate}

\item  In section 3 we first  recall  basic definitions and results on Hopf algebras and Drinfeld twists \cite{Drinfeld, Drintw}.
We then move on with the presentation of our main findings regarding the derivation of admissible Drinfeld twists
associated to set-theoretic solutions of the YBE.  Specifically:

\begin{enumerate}
\item In subsection 3.1 we review basic background on bialgebras, Hopf and quasi-triangular Hopf algebras.  
We then introduce the notion of admissible Drinfeld twist and recall 
fundamental propositions on admissible twists adjusted for your purposes here. 

\item In subsection  3.2 we present the new findings on the Drinfeld twists for involutive, non-degenerate, set-theoretic solutions. 
To achieve this we rely on the results of \cite{DoiSmo2}, where the set-theoretic braid $r$-matrix 
was obtained  from the permutation operator via a similarity transformation. We use the similarity transformation to derive explicit forms
for the set-theoretic twist. We also derive co-product expressions of the twist, which enable the derivation of the $n$-twist.
From the derived explicit expressions we are able to show the admissibility of the twist i.e. 
we show that it satisfies the co-cycle condition.
Note that we introduce the term {\it quasi-admissible twist} given
that we only show the admissibility of the set-theoretic twists, 
without exploring the notions of the co-unit and antipode, 
although some related comments are presented at the end of subsection 3.3.

\item In subsection 3.3 we focus on a simple, but characteristic example of set-theoretic solution 
of the YBE known as Lyubashenko's solution \cite{Drin}). We first introduce this class of solutions 
and we show that they
can be expressed as {\it Reshetikhin type twists}  recalling the results of \cite{DoiSmo2}. 
We then move on to show that this is an admissible twist and we derive simple expressions for the $n$-twist.
At the end of this subsection  we present some preliminary observations related to
the  action of the co-unit on the twisted co-products. We employ the Lyubashenko 
solution to illustrate the non-trivial action of the co-unit.
\end{enumerate}

\end{itemize}

\section{Preliminaries}

\noindent We present in this section basic background information regarding set-theoretic solutions of the 
Yang-Baxter equation and braces as well as a brief review on the recent findings of \cite{DoiSmo1}
on the links between set-theoretic solutions of the Yang-Baxter equation from braces and quantum algebras.

\subsection{The set-theoretic Yang-Baxter equation}
\noindent Let $X=\{{\mathrm x}_{1},{\mathrm x}_{2}, \ldots, {\mathrm x}_{\cal N}\}$ be a set and ${\check r}:X\times X\rightarrow X\times X,$ such that
 \[{\check r}(x,y)= \big (\sigma _{x}(y), \tau _{y}(x)\big ).\] 
We say that $\check r$ is non-degenerate if $\sigma _{x}$ 
and $\tau _{y}$ are bijective functions. Also, the solutions $(X, \check r)$  is involutive: $\check r ( \sigma _{x}(y), \tau _{y}(x)) = 
(x, y)$, ($\check r \check r (x, y) = (x, y)$). We focus on non-degenerate, involutive solutions of the set-theoretic braid equation:
\[({\check r}\times id_{X})(id_{X}\times {\check r})({\check r}\times id_{X})=(id_{X}\times {\check r})({\check r}
\times id_{X})(id_{X}\times {\check r}).\]

Let $V$ be the space of dimension equal to the cardinality of $X$, and with a slight abuse of notation, 
let  $\check r$ also  denote the $R$-matrix associated to the linearisation of ${\check r}$ on 
$V={\mathbb C }X$ (see \cite{LAA} for more details), i.e.
$\check r$  is the ${\mathcal N}^2 \times{\mathcal N}^2$ matrix: 
\begin{equation}
\check r = \sum_{x,y,z,w \in X} \check r(x,z|y,w) e_{x,z} \otimes e_{y, w}, \label{set0}
\end{equation}
where
$e_{x, y}$ is the ${\mathcal N} \times {\mathcal N}$  matrix: $(e_{x,y})_{z,w} =\delta_{x,z}\delta_{y,w} $.
Then for the $\check r$-matrix related  to $(X,{\check r})$:  ${\check r}(x,z|y,w)=\delta_{z, \sigma_x(y)} 
\delta_{w, \tau_y(x)}$.
 Notice that the matrix $\check r:V\otimes V\rightarrow V\otimes V$ satisfies the (constant) Braid equation:
\[({\check r}\otimes I_{V})(I_{V}\otimes {\check r})({\check r}\otimes I_{V})=(I_{V}\otimes {\check r})({\check r}\otimes 
I_{V})(I_{V}\otimes {\check r}).\]
Notice also that ${\check r}^{2}=I_{V \otimes V}$ the identity matrix, because $\check r$ is involutive.

For set-theoretic solutions it is thus convenient to use the matrix notation:
\begin{equation}
{\check r}=\sum_{x, y\in X} e_{x, \sigma_x(y)}\otimes e_{y, \tau_y(x)}. \label{brace1}
\end{equation}
Define also, $r={\mathcal P}{\check r}$, where ${\mathcal P} = \sum_{x, y \in X} e_{x,y} \otimes e_{y,x}$ 
is the permutation operator,  consequently
${ r}=\sum_{x, y\in X} e_{y,\sigma_x(y)}\otimes e_{x, \tau_y(x)}.$
The  Yangian \cite{Yang} is a special case: $\check r(x,z|y,w)= \delta_{z,y}\delta_{w,x} $.

$ $

Let us now recall the role of braces in the derivation of set-theoretic solutions of the Yang-Baxter equation.
In \cite{[25], [26]} Rump showed that every solution $(X, \check r)$ can be in a good way embedded in a brace.
\begin{defn}[Proposition $4$, \cite{[26]}]
A {\em left brace } is an abelian group $(A; +)$ together with a 
multiplication $\cdot $ such that the circle operation $a \circ  b =
 a\cdot b+a+b$ makes $A$ into a group, and $a\cdot (b+c)=a\cdot b+a\cdot c$.
\end{defn} 
 In many papers, an equivalent definition is used \cite{[6]}.

 The additive identity of a brace $A$ will be denoted by $0$ and the multiplicative identity by $1$. In every brace $0=1$. 
The same notation will be used for skew braces (in every skew brace $0=1$).

Throughout this paper we will use the following result,  which  is  implicit in \cite{[25], [26]} 
and explicit in Theorem 4.4 of \cite{[6]}.

\begin{thm}\label{Rump}(Rump's theorem, \cite{[25], [26], [6]}).  It is known that for an involutive, non-degenerate solution  
of the braid equation there is always an underlying brace $(B, \circ , +)$, 
such that the maps $\sigma _{x}$ and $\tau _{y}$ come from this brace, and $X$ is a subset in this brace such that ${\check r}(X,X)
\subseteq (X,X)$ and ${\check r}(x,y)=(\sigma _{x}(y), \tau _{y}(x))$, where $\sigma _{x}(y)=x\circ y-x$, $\tau _{y}(x)=t\circ x-t$, 
where $t $ is the inverse of $\sigma _{x}(y)$ in the circle group $(B, \circ )$.  Moreover, we can assume that every element from $B$ 
belongs to the additive group $(X, +)$  generated by elements of $X$. In addition every solution of this type is a non-degenerate, involutive 
set-theoretic solution of the braid equation.
\end{thm}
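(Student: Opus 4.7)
The plan is to treat the two implications of Rump's theorem separately. The direction \textit{brace implies solution} I would handle using the $\lambda$-map $\lambda_a(b) := a \circ b - a$ on a left brace $(B, \circ, +)$. First, expanding the distributivity axiom $a \cdot (b + c) = a \cdot b + a \cdot c$ through the identity $a \circ b = a \cdot b + a + b$ shows that each $\lambda_a$ is an additive endomorphism, while a short computation with the same identity gives $\lambda_{a \circ b} = \lambda_a \lambda_b$. Hence $\lambda$ is a group homomorphism $(B, \circ) \to \mathrm{End}(B,+)$, each $\lambda_a$ is invertible with inverse $\lambda_{a^{-1}}$, and setting $\sigma_x := \lambda_x$ gives bijectivity of $\sigma_x$ for free. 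For $\tau_y$, I would rely on the factorization $\sigma_x(y) \circ \tau_y(x) = x \circ y$, which follows directly from the definition $\tau_y(x) = t \circ x - t$ with $t = \sigma_x(y)^{-1}$, to invert $\tau_y$ in terms of $\sigma$ and the circle inverse.

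Involutivity is then a short check: setting $u = \sigma_x(y)$ and $v = \tau_y(x)$, the identity $u \circ v = x \circ y$ gives $\sigma_u(v) = u \circ v - u = x$, while $\tau_v(u) = \lambda_{x^{-1}}(u) = \lambda_{x^{-1}}(\lambda_x(y)) = \lambda_{x^{-1} \circ x}(y) = \lambda_1(y) = y$, where I use the convention $0 = 1$ in a brace. For the braid equation I would evaluate both sides on $(x,y,z)$ and reduce them to the pair of identities $\sigma_x \sigma_y = \sigma_{\sigma_x(y)} \sigma_{\tau_y(x)}$ together with a dual relation for $\tau$; the first follows because both sides equal $\lambda_{x \circ y}$, combining the homomorphism property of $\lambda$ with $\sigma_x(y) \circ \tau_y(x) = x \circ y$, and the second is handled analogously, again invoking $\sigma_x(y) \circ \tau_y(x) = x \circ y$ together with the already-established involutivity.

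For the converse, I would follow Rump's construction via the structure group $G(X, \check r) := \langle X \mid x y = \sigma_x(y)\, \tau_y(x) \rangle$. The key structural fact, established by Etingof--Schedler--Soloviev for involutive non-degenerate solutions, is that the underlying set of $G(X, \check r)$ is in natural bijection with the free abelian group $\mathbb{Z}^X$, which transports an additive structure onto the same set as the multiplicative one. I would then verify the brace axiom $a \cdot (b+c) = a \cdot b + a \cdot c$ on this shared underlying set and check that the original $\sigma_x$ and $\tau_y$ are recovered from the formulas of the theorem. The main obstacle is exactly this compatibility step: constructing the bijection with $\mathbb{Z}^X$ requires a careful rewriting argument that exploits involutivity to show the defining relations are length-preserving and that distinct normal forms represent distinct group elements. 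Once this bijection is in hand, verifying distributivity and that $X$ embeds compatibly are essentially finite checks; the condition $\check r(X,X) \subseteq (X,X)$ is then read off directly from the definition of the structure group.
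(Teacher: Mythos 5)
The paper does not actually prove this theorem: it is imported verbatim from Rump \cite{[25], [26]} and from Theorem 4.4 of \cite{[6]}, so there is no internal proof to compare against. Measured against the standard argument in those references, your outline is essentially correct and follows the same route: the forward direction via the homomorphism $\lambda: (B,\circ)\to \mathrm{Aut}(B,+)$, $\lambda_a(b)=a\circ b-a$, together with the identity $\sigma_x(y)\circ\tau_y(x)=x\circ y$; the converse via the structure group $G(X,\check r)$ and the Etingof--Schedler--Soloviev bijection of $G(X,\check r)$ with the free abelian group on $X$, which transports the additive structure and yields the brace. Two points deserve care if you write this out in full. First, the braid equation for a map $(x,y)\mapsto(\sigma_x(y),\tau_y(x))$ is equivalent to \emph{three} compatibility conditions, not two: besides $\sigma_x\sigma_y=\sigma_{\sigma_x(y)}\sigma_{\tau_y(x)}$ and its $\tau$-dual there is a mixed condition $\tau_{\sigma_{\tau_y(x)}(z)}(\sigma_x(y))=\sigma_{\tau_{\sigma_y(z)}(x)}(\tau_z(y))$; in the involutive non-degenerate setting the three are known to collapse to the first, but that reduction (or a direct check of the third) must be stated. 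Second, in the converse direction you should explicitly record that the natural map $X\to G(X,\check r)$ is injective for involutive non-degenerate solutions --- this is what makes $X$ a genuine subset of the brace and is itself a consequence of the $\mathbb{Z}^{(X)}$-bijection you cite, which, as you rightly note, carries the real technical weight of the theorem.
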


We will call the brace $B$  an underlying brace of the solution $(X,{\check r})$, or a brace associated to the  
solution $(X,{\check r})$. We will also say that the solution $(X, \check r )$ is associated to brace $B$. 
Notice that this is also related  to the formula of
 set-theoretic solutions associated to the braided group (see \cite{ESS} and \cite{gateva}).

\subsection{Yang-Baxter equation $\&$ quantum groups}

\noindent In this subsection we briefly review the main results reported in \cite{DoiSmo1} on the various links between braces, 
representations of the $A$-type Hecke algebras, 
and quantum algebras.

Recall first the Yang-Baxter equation in the braid form in the presence of spectral parameters $\lambda_1,\ \lambda_2$ ($\delta = \lambda_1 - \lambda_2$):
\begin{equation}
\check R_{12}(\delta)\ \check R_{23}(\lambda_1)\ \check R_{12}(\lambda_2) = \check R_{23}(\lambda_2)\
 \check R_{12}(\lambda_1)\ \check R_{23}(\delta) . \label{YBE1}
\end{equation}
where $\check R: V \to V,$  and let in general $\check R = \sum_{j} a_j \otimes b_j,$ then in the index notation $\check R_{12} =\sum_j a_j \otimes b_j \otimes I_V,$ 
 $\check R_{23} =\sum_j  I_V \otimes a_j \otimes b_j $ and  $\check R_{13} =\sum_j a_j \otimes  I_V \otimes b_j.$

We focus here on brace  solutions of the YBE, given by (\ref{brace1}). 
The brace solution\footnote{All, finite, non-degenerate, involutive, 
set-theoretic solutions of the YBE are coming from braces,
therefore we will call such solutions {\it  brace solutions}.} $\check r$ (\ref{brace1})
is a representation of the $A$-type Hecke algebra for $q=1$ (see also \cite{DoiSmo1}),
as $\check r$ satisfies the braid relations and also 
$\check r^2 = I_{X \otimes X}.$ Due to the fact that the set-theoretic $\check r$ provides a representation 
of the $A$-type Hecke algebra 
Baxterized solutions of the Yang-Baxter equation can be derived \cite{DoiSmo1}:
\begin{equation}
\check R(\lambda) = \lambda \check r + I_X\otimes I_X, \label{braid1}
\end{equation}
where $I_{X}$ is the identity matrix of dimension equal to the cardinality of the set $X$.

Let also $R = {\cal P} \check R$, (recall the permutation operator ${\cal P} = \sum_{x, y} e_{x,y}\otimes e_{y,x}$),  
then the following basic properties for $R$ matrices coming from braces were shown in \cite{DoiSmo1}:\\

\noindent {\bf Basic Properties.} {\em  
The brace $R$-matrix satisfies the following fundamental properties:}
\begin{eqnarray}
&&  R_{12}(\lambda)\  R_{21}(-\lambda) = (-\lambda^2 +1) I_{X\otimes X}, 
~~~~~~~~~~~~~
\mbox{{\it Unitarity}} \label{u1}\\
&&  R_{12}^{t_1}(\lambda)\ R_{12}^{t_2}(-\lambda -{\cal N}) = 
\lambda(-\lambda -{\cal N})I_{X\otimes X}, 
~~~~~\mbox{{\it Crossing-unitarity}} \label{u2}\\
&& R_{12}^{t_1 t_2}(\lambda) = R_{21}(\lambda), \label{tt}\nonumber
\end{eqnarray}
{\it where $^{t_{1,2}}$ denotes transposition on the first, second space respectively.}

$ $

\noindent{\bf The quantum algebra.}  Our approach on deriving the quantum groups associated to set-theoretic solutions \cite{DoiSmo1, DoiSmo2}
is based on the FRT construction \cite{FadTakRes}, which is somehow dual to the Hopf algebraic description \cite{Drinfeld}.
Indeed, the FRT construction can be considered as an inverse procedure of the one that uses 
the quasi-triangular Hopf algebra axioms in obtaining a solution of the YBE. 
The FRT construction considers a solution $R:  V \otimes V\to V \otimes  V$  ($V$ 
is usually a finite vector space) of the YBE as an input and produces a bialgebra as an output.

Given a solution of the Yang-Baxter equation, the quantum algebra is defined via the fundamental relation \cite{FadTakRes}
(we have multiplied the familiar RTT relation with the permutation operator):
\begin{equation}
\check R_{12}(\lambda_1 -\lambda_2)\ L_1(\lambda_1)\ L_2(\lambda_2) = L_1(\lambda_2)\ L_2(\lambda_1)\ 
\check R_{12}(\lambda_1 -\lambda_2). \label{RTT}
\end{equation}
$\check R(\lambda) \in \mbox{End}({\mathbb C}^{\cal N}) \otimes  \mbox{End}({\mathbb C}^{\cal N})$, $\ L(\lambda) \in 
\mbox{End}({\mathbb C}^{\cal N} ) \otimes {\mathfrak A}$, where ${\mathfrak A}$ is the quantum algebra defined by (\ref{RTT}). 
We focus on solutions associated to braces given by (\ref{braid1}), (\ref{brace1}). 
The defining relations of the corresponding quantum algebra were derived in \cite{DoiSmo1}.

\noindent {\it The quantum algebra associated to the brace $R$ matrix  (\ref{braid1}), (\ref{brace1}) 
is defined by generators $L^{(m)}_{zw},\ z, w \in X$, and defining relations }
\begin{eqnarray}
L_{z,w}^{(n)} L_{\hat z, \hat w}^{(m)} - L_{z,w}^{(m)} L_{\hat z, \hat w}^{(n)} &=& 
L^{(m)}_{z, \sigma_w(\hat w)} L^{(n+1)}_{\hat z,\tau_{\hat w}( w)}- L^{(m+1)}_{z, \sigma_w(\hat w)} 
L^{(n)}_{\hat z, \tau_{\hat w}( w)}\nonumber\\ &-& L^{(n+1)}_{ \sigma_z(\hat z),w} 
L^{(m)}_{\tau_{\hat z}( z), \hat w }+ L^{(n)}_{ \sigma_z(\hat z, )w}  L^{(m+1)}_{\tau_{\hat z}( z), \hat w}. \label{fund2}
\end{eqnarray}

The proof is based on the fundamental relation (\ref{RTT}) and the form of the brace $R$- matrix
(for the detailed proof see \cite{DoiSmo1}). Recall also that
in the index notation we define $\check R_{12} = \check R \otimes \mbox{id}_{\mathfrak A}$:
\begin{eqnarray}
&&  L_1(\lambda) = \sum_{z, w \in X} e_{z,w} \otimes I \otimes L_{z,w}(\lambda),\ \quad  L_2(\lambda)= \sum_{z, w \in X}
I  \otimes  e_{z,w}  \otimes L_{z,w}(\lambda)  \label{def}
\end{eqnarray} where $L_{z,w}(\lambda) = \sum_{n=0}^{\infty}\lambda^{-n}L_{z,w}^{(n)}$ and $L_{z,w}^{(n)}$
are the generators of the affine algebra ${\mathfrak A}$ and $\check R$ is given in (\ref{braid1}), (\ref{brace1}). 
The quantum algebra is equipped with a co-product $\Delta: {\mathfrak A} 
\to {\mathfrak A} \otimes {\mathfrak A}$ \cite{FadTakRes, Drinfeld}. Indeed, we define 
$ {\mathrm T}_{1,23}(\lambda)= L_{13}(\lambda) L_{12}(\lambda),$
which satisfies (\ref{RTT}) and is expressed as ${\mathrm T}_{1,23}(\lambda) = \sum_{x,y \in X} e_{x,y} \otimes \Delta(L_{x,y}(\lambda)).$

\section{Set-theoretic solutions as Drinfeld twists}

\noindent  In this section we recall  basic definitions and results on Hopf algebras and Drinfeld twists \cite{Drinfeld, Drintw}, 
we then move on with the presentation of our main findings regarding the derivation of admissible Drinfeld twists
associated to set-theoretic solutions of the YBE.

\subsection{Quasi-triangular Hopf algebras $\&$ Drinfeld twists}

\noindent Before we present our brief review on Drinfeld's twists let us first recall the notion of a quasi-triangular Hopf algebra.
The known quantum algebras defined by the RTT relation (\ref{RTT}) are 
{\it quasi-triangular Hopf algebras} (on a detailed discussion on bialgebras, Hopf algebras, and quasi-triangular 
(quasi) Hopf algebras  we refer the interested reader for instance in \cite{Drinfeld, Drintw, ChaPre, Majid,  MaiSan}). 
We give below some useful definitions  regarding, bialgebras, Hopf and
quasi-triangular Hopf algebras (see also for instance \cite{ChaPre, Majid}).

\begin{defn}{\label{def0}} A bialgebra $H$ is a vector space over some field $k$ with linear maps:
\begin{enumerate}
\item $m: H\otimes H \to H $,  multiplication, $m(a\otimes b) = ab$, which is associative, i.e. $(ab)c = a(bc)$ $\forall a,b,c \in H.$
\item $\eta:  k \to H$ such that it produces the unit element for $H$, $\eta(1) = {\bf 1}_{H}$ ($\eta(c) = c \cdot  {\bf 1}_{H}$).
\item $\Delta: H \to H \otimes H$, coproduct $\Delta(a) = \sum_j a_j \otimes b_j,$ which is coassociative, i.e. 
$(\Delta \otimes \mbox{id})\Delta = (\mbox{id} \otimes \Delta)\Delta.$
\item $\epsilon: H \to k$, counit such that $(\epsilon \otimes \mbox{id})\Delta(a) = (\mbox{id} \otimes \epsilon)\Delta(a) = a$, $\forall a\in H$.
\item $\Delta,\ \epsilon$ are algebra homomorphisms  and $H \otimes H$ has the structure of a tensor product algebra: 
$(a \otimes b)(c\otimes d)=ac \otimes bd$, $\forall a,b,c,d \in H.$
\end{enumerate}
\end{defn}
In other words, $(H,\ m,\ \eta)$ is an associative algebra and $(H,\Delta, \epsilon)$ is an associative coalgebra.
The compatibility of the bialgebra  axioms is typically represented by commutative 
diagrams (see for instance \cite{ChaPre, Majid} and references therein). 

\begin{defn}
A Hopf algebra $H$ is a bialgebra equipped  with a bijective linear map, the ``antipode'', $s: H \to H,$ such that 
$m\big ((s\otimes \mbox{id})\Delta(g) \big)=m\big ((\mbox{id}\otimes s)\Delta(g)\big ) = \epsilon(g)\cdot  {\bf 1}_H,$ $\forall g \in H.$
\end{defn}

We may  now give the definition of a quasi-triangular Hopf algebra, which is our main focus in this subsection.

\begin{defn}{\label{def02}} Let ${\mathfrak A}$ be a  Hopf  algebra over some field $k$, then
${\mathfrak A}$ is a quasi-triangular Hopf algebra if there exists an invertible element ${\cal R}\in {\cal A} \otimes {\cal A}$:

\begin{enumerate}
\item  ${\cal R}\ \Delta(a) = \Delta^{op}(a)\ {\cal R}$, $\ \forall a \in {\mathfrak A}$,\\
where $\Delta: {\mathfrak A} \to {\mathfrak A} \otimes {\mathfrak A}$ is the co-product on ${\mathfrak A}$ and 
$\Delta^{op}(a) = \pi \circ \Delta(a)$, $\pi: 
{\mathfrak A} \to {\mathfrak A}$ such that $\pi(a\otimes b) = b \otimes a$. 

\item $(\mbox{id} \otimes \Delta){\cal R} = {\cal R}_{13} {\cal R}_{12},$ 
and $ (\Delta \otimes \mbox{id}) {\cal R} = {\cal R}_{13}{\cal R}_{23}$.
\end{enumerate}
\end{defn}

Also, the following statements hold:
\begin{itemize}
 \item The antipode $s: {\mathfrak A} \to {\mathfrak A}$ satisfies  $(\mbox{id} \otimes s) {\cal R}^{-1} 
= {\cal R}$, $(s \otimes \mbox{id} ){\cal R} = {\cal R}^{-1}$, 
\item The  co-unit  $\epsilon: {\mathfrak A} \to k$  satisfies
$(\mbox{id} \otimes \epsilon ) {\cal R} = ( \epsilon \otimes \mbox{id} ) {\cal R} = {\bf 1} $.

\item Due to  (1) and (2) of Definition \ref{def02} the ${\cal R}$-matrix satisfies the Yang-Baxter equation 
${\cal R}_{12} {\cal R}_{13} {\cal R}_{23} ={\cal R}_{23} {\cal R}_{13} {\cal R}_{12}$. 
\end{itemize}
Proofs of the above statements can be found for instance in \cite{ChaPre, Majid}. 
Note that a sufficient condition for the co-associativity of $\Delta^{(op)}$ is 
$({\cal R} \otimes  {\bf 1})\ (\Delta \otimes \mbox{id}){\cal R} = ( {\bf 1} \otimes {\cal R})\ (\mbox{id}\otimes\Delta){\cal R}$, 
i.e. ${\cal R}$ should satisfy the co-cycle condition. The co-cycle condition together with  (2) of Definition \ref{def02} lead to YBE.

The index notation holds in this case as follows.  Let ${\cal R} = \sum_j a_j\otimes b_j\in {\mathfrak A} \otimes {\mathfrak A}$ 
then ${\cal R}_{12}=  \sum_j a_j\otimes b_j \otimes  {\bf 1}$,  $\ {\cal R}_{23}=  \sum_j   {\bf 1}\otimes a_j\otimes b_j $ etc.  

If in addition to the above conditions ${\cal R} {\cal R}^{(op)}= {\bf 1}_{{\cal A} \otimes {\cal A}}$, where ${\cal R}^{(op)} = 
\sigma\circ {\cal R}$, (or in the index notation ${\cal R}_{12} {\cal R}_{21} = {\bf 1}$), then ${\mathfrak A}$ is a {\it  triangular Hopf-algebra}

\subsection*{Drinfelds twists}
In \cite{Drinfeld, Drintw} Drinfeld introduced
the notion of twisting (or deformation) for (quasi)-Hopf algebras and quasi-triangular (quasi)-Hopf
algebras. Recall that we focus in this subsection on quasi-triangular Hopf algebras.
 Let ${\mathfrak A}$, equipped with $(m, \eta, \epsilon, \Delta,\ s,\ \epsilon)$, be a quasi-triangular Hopf algebra over ${\mathbb C}$ 
and ${\mathfrak F}\in {\mathfrak A} \otimes {\mathfrak A}$ be an invertible
element such that $(\mbox{id} \otimes \epsilon){\mathfrak F} = (\epsilon \otimes \mbox{id}){\mathfrak F} = {\bf 1}$.
The twist of ${\mathfrak A}$ generated by ${\mathfrak F}$ defines a new
quasi-triangular Hopf algebra $\tilde {\mathfrak A}$,  equipped with $(m, \eta, \epsilon, \tilde \Delta,\ \tilde s,\ \tilde \epsilon),$ 
having the same elements and product law
as ${\mathfrak A}$, and such that:
\begin{equation}
\tilde {\cal R} = {\mathfrak F}^{op} {\cal R}\  {\mathfrak F}^{-1},
\end{equation}
where we define ${\mathfrak F}^{op} = \pi \circ {\mathfrak F}$, and the permutation (flip) map $\pi: 
a\otimes b \mapsto b \otimes a$.
In the index notation ${\mathfrak F}_{12}^{op} = {\mathfrak F}_{21}$. 
Also, $u=m\big ( (\mbox{id} \otimes s){\mathfrak F}\big )$ is an invertible element in ${\mathfrak A}$ with 
$ u^{-1} =m\big (( s \otimes \mbox{id} ){\mathfrak F}^{-1}\big )$,
and the twisted maps, $\tilde \Delta$ and $\tilde s$ are defined as
\begin{equation}
\tilde \Delta(a) = {\mathfrak F} \Delta(a) {\mathfrak F}^{-1}, \quad \tilde s = u s(a) u^{-1} .
\end{equation}

\begin{defn} {\label{def2}}  A twist ${\mathfrak F}$ is called admissible if it  satisfies the co-cycle condition:
\begin{equation}
( {\mathfrak F}\otimes  {\bf 1}) \ (\Delta \otimes \mbox{id})  {\mathfrak F} =  
( {\bf 1} \otimes  {\mathfrak F})\ (\mbox{id} \otimes \Delta)  {\mathfrak F}.
\end{equation}
\end{defn}
In the special case that $\tilde {\cal R} = {\bf 1}_{{\cal A} \otimes {\cal A}},$ then ${\cal R} = ({\mathfrak F}^{(op)})^{-1}{\mathfrak F}$ 
and such a twist is called {\it factorizable}.

We now introduce some convenient index notation:
\begin{equation}
{\mathfrak F}_{12,3}:= (\Delta \otimes \mbox{id}) {\mathfrak F} , \qquad {\mathfrak F}_{1,23}:=  
(\mbox{id} \otimes \Delta ) {\mathfrak F}.
\end{equation}
Then according to the co-cycle condition we introduce,  
$ {\mathfrak F}_{123} =  {\mathfrak F}_{12}  {\mathfrak F}_{12,3} =  {\mathfrak F}_{23} {\mathfrak F}_{1.23}$. 
In general, we define
\begin{equation}
{\mathfrak F}_{12...n, n+1}:= (\Delta^{(n)} \otimes \mbox{id}) {\mathfrak F} , 
\qquad {\mathfrak F}_{1,2... n+1}:=  (\mbox{id} \otimes \Delta^{(n)}) 
{\mathfrak F}  \label{codef}
\end{equation}
and consequently we define the $n$-twist, compatible also with the co-cycle condition:
\begin{equation}
{\mathfrak F}_{12...n} = {\mathfrak F}_{12...n-1} {\mathfrak F}_{12..n-1,n} =
{\mathfrak F}_{23..n} {\mathfrak F}_{1,2..n}.
\end{equation}
We introduce a similar notation for tensor representations of the quantum algebra 
\begin{equation}
 {\mathfrak T}_{0,12..n}:= (\mbox{id} \otimes \Delta^{(n)}) {\cal R}  
\end{equation}
where recall  $(\mbox{id} \otimes \Delta){\cal R} = {\cal R}_{02} {\cal R}_{01}$ 
and $(\Delta \otimes \mbox{id}) = {\cal R}_{10} {\cal R}_{20},$ thus
we can explicitly write ${\mathfrak T}_{0,12...n} = {\cal R}_{0n} \ldots {\cal R}_{02} {\cal R}_{01}$. The $n+1$ co-cycle condition 
and the explicit expressions of the $n+1$ twist are also useful for our purposes here:
\begin{eqnarray}
 {\mathfrak F}_{012...n}  &=&  {\mathfrak F}_{01}  {\mathfrak F}_{01,2}  {\mathfrak F}_{012,3} \ldots  {\mathfrak F}_{01...n-1,n}\nonumber \\
& = & {\mathfrak F}_{n-1n}  {\mathfrak F}_{n-2, n-1n} \ldots  {\mathfrak F}_{0,1..n-1 n}. \label{gencoc}
\end{eqnarray}

We state below three Propositions that include some of the main Drinfeld's results \cite{Drinfeld, Drintw} restricted to ${\cal R}$-matrices 
that satisfy the Yang-Baxter equation. These results are most relevant to our present analysis 
that follows in the next subsection. We also refer the interested reader to \cite{MaiSan} 
and references therein on similar proofs using the index notation.

\begin{pro}{(Drinfled 1)\label{lemma1}} Let ${\cal R}$ be a solution of the Yang-Baxter equation 
and ${\mathfrak F}$ be an admissible twist such that, in the index notation, 
$\tilde {\cal R}_{0n}= {\mathfrak F}_{n0}\  
{\cal R}_{0n}\ {\mathfrak F}^{-1}_{0n},$\\  
where $n \in \{1, \ldots N \}$, then
\begin{equation}
 {\mathfrak F}_{1... n-1n0 n+1...N} {\cal R}_{0n} =
\tilde {\cal R}_{0n}{\mathfrak F}_{1... n-10n n+1...N}.
\end{equation}
\end{pro}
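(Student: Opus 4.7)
The plan is to decompose both $(N+1)$-twists using the iterated co-cycle condition~(\ref{gencoc}) so as to expose a central two-site factor ${\mathfrak F}_{0n}$ (respectively ${\mathfrak F}_{n0}$), and then to transport all the remaining ``prepending'' and ``appending'' pieces past a single ${\cal R}_{0n}$ by means of the quasi-triangularity identity $\Delta^{op}(a) = {\cal R}\,\Delta(a)\,{\cal R}^{-1}$, which is available here since ${\mathfrak F}$ is an admissible twist of a quasi-triangular Hopf algebra.

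Concretely, starting from ${\mathfrak F}_{0n}$ I would prepend the sites $n-1, n-2, \ldots, 1$ via the second form of the co-cycle, and then append the sites $n+1, \ldots, N$ via the first form, obtaining
\begin{equation*}
{\mathfrak F}_{1\ldots n-1\, 0\, n\, n+1\ldots N} \;=\; {\mathfrak F}_{0n}\, A_{n-1}\cdots A_1\, B_{n+1}\cdots B_N,
\end{equation*}
where $A_k := {\mathfrak F}_{k,\, k+1\ldots n-1\, 0\, n}$ is an $(\mbox{id}\otimes \Delta^{(n-k)}){\mathfrak F}$-type partial twist and $B_m := {\mathfrak F}_{1\ldots n-1\, 0\, n\, n+1\ldots m-1,\, m}$ is a $(\Delta^{(m-2)}\otimes \mbox{id}){\mathfrak F}$-type partial twist. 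The analogous decomposition with slots $0$ and $n$ interchanged at every appearance reads
\begin{equation*}
{\mathfrak F}_{1\ldots n-1\, n\, 0\, n+1\ldots N} \;=\; {\mathfrak F}_{n0}\, A'_{n-1}\cdots A'_1\, B'_{n+1}\cdots B'_N.
\end{equation*}

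The key observation is that in every $A_k$ and every $B_m$ the slots $0$ and $n$ occupy two \emph{adjacent} output positions of the relevant iterated coproduct, so the swap $0\leftrightarrow n$ is a local operation on the corresponding $\Delta^{(j)}$. Swapping two adjacent outputs of $\Delta^{(j)}$ amounts to replacing $\Delta$ by $\Delta^{op}$ at that position, which by quasi-triangularity is exactly conjugation by ${\cal R}$ between those two slots. Hence $A'_k = {\cal R}_{0n}\, A_k\, {\cal R}_{0n}^{-1}$ and $B'_m = {\cal R}_{0n}\, B_m\, {\cal R}_{0n}^{-1}$ for every $k$ and $m$. The intermediate ${\cal R}_{0n}^{\pm 1}$'s then telescope in the product, leaving
\begin{equation*}
{\mathfrak F}_{1\ldots n-1\, n\, 0\, n+1\ldots N} \;=\; {\mathfrak F}_{n0}\, {\cal R}_{0n}\, {\mathfrak F}_{0n}^{-1}\, {\mathfrak F}_{1\ldots n-1\, 0\, n\, n+1\ldots N}\, {\cal R}_{0n}^{-1},
\end{equation*}
after which the defining twist equation $\tilde{{\cal R}}_{0n} = {\mathfrak F}_{n0}\,{\cal R}_{0n}\,{\mathfrak F}_{0n}^{-1}$, together with right-multiplication by ${\cal R}_{0n}$, delivers the stated identity.

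The main obstacle I anticipate is essentially bookkeeping: one must verify at each stage that slots $0$ and $n$ really do sit in adjacent output positions of the iterated coproduct defining the partial twist in question, so that each individual swap $A_k\mapsto A'_k$ (and $B_m\mapsto B'_m$) reduces to a single ${\cal R}_{0n}$-conjugation and no cascade of additional ${\cal R}$-matrices appears. Once this adjacency is confirmed in each case, the remainder of the argument is the telescoping cancellation indicated above.
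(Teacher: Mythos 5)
Your proposal is correct and follows essentially the same route as the paper: both decompose the $(N+1)$-twist via the generalized co-cycle condition~(\ref{gencoc}) to isolate the central factor ${\mathfrak F}_{n0}$, transport ${\cal R}_{0n}$ through each partial twist using the quasi-triangularity intertwining relation (your conjugations $A'_k={\cal R}_{0n}A_k{\cal R}_{0n}^{-1}$, $B'_m={\cal R}_{0n}B_m{\cal R}_{0n}^{-1}$ are exactly the paper's statement (2), and your telescoping is its step-by-step push of ${\cal R}_{0n}$ from right to left), and finish with $\tilde{\cal R}_{0n}={\mathfrak F}_{n0}{\cal R}_{0n}{\mathfrak F}_{0n}^{-1}$. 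The adjacency bookkeeping you flag is real but unproblematic, and the paper likewise treats that commutation as an immediate consequence of (\ref{codef}) and Definition~\ref{def02}.
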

\begin{proof}
The proof relies on three basic statements:
\begin{enumerate}
\item $\tilde {\cal R}_{0n}= {\mathfrak F}_{n0}\  {\cal R}_{0n}\ {\mathfrak F}^{-1}_{0n} $
\item ${\mathfrak F}_{m-1, m...n 0 n+1...N} {\cal R}_{0n} =
{\cal R}_{0n}{\mathfrak F}_{m-1, m...n-1 0 n...N}$,\\ $\ m\leq n \in \{1, \ldots N\}$
and\\ $ {\mathfrak F}_{12...n0n+1...m, m+1} {\cal R}_{0n} =
{\cal R}_{0n}{\mathfrak F}_{12...n-10n...m, m+1}, $ \\ $\ m\geq n \in \{1, \ldots N\}$
\item the generalized co-cycle condition (\ref{gencoc})
\end{enumerate}
Statement (2) is a natural consequence  of (\ref{codef}) and the fact that ${\cal R}$ 
satisfies the Yang-Baxter equation or equivalently conditions (1) and (2) of Definition \ref{def02}.

We start with
\begin{eqnarray}
&& {\mathfrak F}_{1... n-1n0 n+1...N} {\cal R}_{0n}  = \nonumber\\
&& {\mathfrak F}_{1... n-1n0} {\mathfrak F}_{1... n-1n0, n+1} \ldots 
{\mathfrak F}_{1... n-1n0 n+1...N-1,N} {\cal R}_{0n}=\nonumber\\
&&  {\mathfrak F}_{1... n-1n0} {\cal R}_{0n} {\mathfrak F}_{1... n-10n, n+1} \ldots 
{\mathfrak F}_{1... n-1 0 n n+1...N-1,N}, \label{exp1}
\end{eqnarray}
where we have repeatedly used the generalized expressions (\ref{gencoc}) and statement (2).
We now focus on 
\begin{eqnarray}
{\mathfrak F}_{1... n-1n0} {\cal R}_{0n}  &=& 
{\mathfrak F}_{n0} {\mathfrak F}_{n-1, n0}\ldots 
{\mathfrak F}_{1, 2...n0}  {\cal R}_{0n} \nonumber\\ 
&=&  {\mathfrak F}_{n0} {\cal R}_{0n}  {\mathfrak F}_{n-1, 0n}\ldots 
{\mathfrak F}_{1, 2...0n}  \nonumber\\
& = &  \tilde {\cal R}_{0n}   {\mathfrak F}_{0n}
 {\mathfrak F}_{n-1, 0n}\ldots {\mathfrak F}_{1, 2...0n}  \nonumber\\
&=& \tilde  {\cal R}_{0n}   {\mathfrak F}_{12...0n}. \label{exp20}
\end{eqnarray}
Then by means of (\ref{exp20}) statement (1) and (\ref{gencoc}) expression (\ref{exp1}) becomes\\
$\tilde  {\cal R}_{0n} {\mathfrak F}_{1... n-1 0 n n+1...N},$
which concludes our proof.
\end{proof}

We may now present  the next two basic Propositions.
\begin{pro}{(Drinfeld 2) \label{Drin1}} Let ${\cal R}$ satisfy the Yang-Baxter equation and 
$\tilde{\cal R} ={\mathfrak  F}^{op}\  {\cal R}\ {\mathfrak  F}^{-1}$,
 where ${\mathfrak F}$ is an admissible twist, then $\tilde {\cal R}$ also satisfies the YBE.
\end{pro}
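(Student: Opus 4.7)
The plan is to start from the word $\tilde{\mathcal{R}}_{12}\tilde{\mathcal{R}}_{13}\tilde{\mathcal{R}}_{23}\,\mathfrak{F}_{123}$, push each $\tilde{\mathcal{R}}_{ij}$ to the right of the 3-twist by means of \propref{lemma1}, and thus reduce the YBE for $\tilde{\mathcal{R}}$ to the YBE for $\mathcal{R}$ modulo an outer admissible twist which will cancel at the end.

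First I would specialise \propref{lemma1} to the three-tensor setting (so $N=2$ in that statement), obtaining three intertwining identities of the schematic shape
\begin{equation*}
\mathfrak{F}_{\pi(1)\pi(2)\pi(3)}\,\mathcal{R}_{ij}\;=\;\tilde{\mathcal{R}}_{ij}\,\mathfrak{F}_{\pi'(1)\pi'(2)\pi'(3)},
\end{equation*}
one for each pair $(i,j)\in\{(1,2),(1,3),(2,3)\}$, with the orderings $\pi$ and $\pi'$ of $\{1,2,3\}$ related by the transposition that swaps the two active indices $i$ and $j$. I would then apply these three identities in succession, pushing the rightmost $\tilde{\mathcal{R}}_{23}$, then $\tilde{\mathcal{R}}_{13}$, then $\tilde{\mathcal{R}}_{12}$, through the twist standing to its right. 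The endpoint should be the identity
\begin{equation*}
\tilde{\mathcal{R}}_{12}\tilde{\mathcal{R}}_{13}\tilde{\mathcal{R}}_{23}\,\mathfrak{F}_{123}\;=\;\mathfrak{F}_{321}\,\mathcal{R}_{12}\mathcal{R}_{13}\mathcal{R}_{23},
\end{equation*}
and an entirely analogous manipulation starting from $\tilde{\mathcal{R}}_{23}\tilde{\mathcal{R}}_{13}\tilde{\mathcal{R}}_{12}\,\mathfrak{F}_{123}$ should yield
\begin{equation*}
\tilde{\mathcal{R}}_{23}\tilde{\mathcal{R}}_{13}\tilde{\mathcal{R}}_{12}\,\mathfrak{F}_{123}\;=\;\mathfrak{F}_{321}\,\mathcal{R}_{23}\mathcal{R}_{13}\mathcal{R}_{12}.
\end{equation*}
The YBE for $\mathcal{R}$ then identifies the two right-hand sides, and invertibility of the 3-twist $\mathfrak{F}_{123}$, which is guaranteed by admissibility of $\mathfrak{F}$, cancels the outer twist on the right and leaves the YBE for $\tilde{\mathcal{R}}$.

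The main obstacle I expect is the bookkeeping of subscripts on the intermediate 3-twists: each application of \propref{lemma1} leaves the outer twist in a specific permuted ordering of $\{1,2,3\}$, and I need to verify at each step that the twist produced on the right of one identity is exactly the permuted twist required on the left of the next. This is essentially a compatibility check using the iterated factorisations of $\mathfrak{F}_{123}$ implied by the generalised co-cycle condition~(\ref{gencoc}); it is routine but needs care. Beyond that, the argument is purely algebraic and relies solely on admissibility of $\mathfrak{F}$ and the YBE for $\mathcal{R}$.
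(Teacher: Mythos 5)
Your proposal is correct and is essentially the paper's own argument read in reverse: the paper multiplies the YBE for $\mathcal{R}$ by $\mathfrak{F}_{321}$ on the left and $\mathfrak{F}_{123}^{-1}$ on the right and pushes the twist rightward through $\mathcal{R}_{12},\mathcal{R}_{13},\mathcal{R}_{23}$ via Proposition~\ref{lemma1}, which is exactly your chain $\tilde{\mathcal{R}}_{12}\tilde{\mathcal{R}}_{13}\tilde{\mathcal{R}}_{23}\,\mathfrak{F}_{123}=\mathfrak{F}_{321}\,\mathcal{R}_{12}\mathcal{R}_{13}\mathcal{R}_{23}$ run in the opposite direction. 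The permutation bookkeeping you flag ($123\to132\to312\to321$, and likewise for the reversed product) works out exactly as you describe, so no gap remains.
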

\begin{proof}
We are employing the index notation together with Proposition \ref{lemma1}. ${\cal R}$ satisfies the YBE, 
we multiply YBE with ${\mathfrak F}_{321}$ form the left and ${\mathfrak F}_{123}^{-1}$ from the right:
\begin{equation}
{\mathfrak F}_{321}\ ( {\cal R}_{12}\ {\cal R}_{13}\  {\cal R}_{23}=  {\cal R}_{23}\
 {\cal R}_{13}\ {\cal R}_{12} )\ {\mathfrak F}_{123}^{-1}. \label{trans}
\end{equation}
We focus on the left side of the equation above, and use Proposition \ref{lemma1}:
\begin{eqnarray}
&&    {\mathfrak F}_{321}\  {\cal R}_{12}\  {\cal R}_{13}\  {\cal R}_{23}\  {\mathfrak F}_{123}^{-1} =  
\tilde  {\cal R}_{12}\  {\mathfrak F}_{312 }\ {\cal R}_{13}\  {\cal R}_{23}\  {\mathfrak F}_{123}^{-1} =  
\tilde  {\cal R}_{12}\ \tilde {\cal R}_{13}\  {\mathfrak F}_{132 }\ {\cal R}_{23}\   {\mathfrak F}_{123}^{-1} = 
\nonumber\\
&&   \tilde  {\cal R}_{12}\ \tilde {\cal R}_{13}\  \tilde {\cal R}_{23}\  {\mathfrak F}_{123 }\ {\mathfrak F}_{123}^{-1} =   
\tilde  {\cal R}_{12}\  \tilde { \cal R}_{13}\  \tilde {\cal R}_{23}.  \nonumber
\end{eqnarray}
Similarly, for the right hand side of (\ref{trans}) we end up with $  \tilde  {\cal R}_{23}\  \tilde { \cal R}_{13}\  \tilde {\cal R}_{12}$, 
and this concludes our proof.
\end{proof}

\begin{pro}{(Drinfeld 3) \label{Drinfeld3}} 
Let  ${\cal R}$ be a solution of the Yang-Baxter equation and  ${\mathfrak F}$ be an admissible twist:
$\tilde{\cal R} ={\mathfrak  F}^{op}\ {\cal R}\ {\mathfrak  F}^{-1}$.
Let also   
${\mathfrak F}_{012...N}$ defined in (\ref{gencoc}), ${\mathfrak T}_{0,12..N} := {\cal R}_{0N} \ldots {\cal R}_{01}$ 
and $\tilde {\mathfrak T}_{0,12...N} := \tilde {\cal R}_{0N} \ldots \tilde {\cal R}_{01}$, then
\begin{equation}
{\mathfrak F}_{12...N0}\ {\mathfrak T}_{0,12..N\ }\ {\mathfrak F}^{-1}_{012..N} = \tilde {\mathfrak T}_{0,12...N} \label{main2}
\end{equation} 
\end{pro}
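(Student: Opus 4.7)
The plan is to prove the identity by iterating Proposition \ref{lemma1} (Drinfeld 1), peeling off one factor $\mathcal{R}_{0n}$ from $\mathfrak{T}_{0,12..N}$ at each stage. Recall that Proposition \ref{lemma1} allows us to swap $\mathfrak{F}_{1\ldots n-1,n,0,n+1\ldots N}\,\mathcal{R}_{0n}$ for $\tilde{\mathcal{R}}_{0n}\,\mathfrak{F}_{1\ldots n-1,0,n,n+1\ldots N}$; in one move it simultaneously converts a factor $\mathcal{R}$ into $\tilde{\mathcal{R}}$ and slides the auxiliary label $0$ one position to the left inside the subscript of the composite twist.

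Starting from the left-hand side $\mathfrak{F}_{12\ldots N0}\,\mathcal{R}_{0N}\mathcal{R}_{0,N-1}\cdots \mathcal{R}_{01}\,\mathfrak{F}^{-1}_{012\ldots N}$, the label $0$ sits at the rightmost slot of $\mathfrak{F}_{12\ldots N0}$, directly adjacent to the index $N$ of the leftmost quantum factor $\mathcal{R}_{0N}$. First I would apply Proposition \ref{lemma1} with $n=N$ to obtain $\tilde{\mathcal{R}}_{0N}\,\mathfrak{F}_{12\ldots N-1,0,N}\,\mathcal{R}_{0,N-1}\cdots \mathcal{R}_{01}\,\mathfrak{F}^{-1}_{012\ldots N}$. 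The label $0$ now sits between $N-1$ and $N$, which is exactly the configuration required to apply Proposition \ref{lemma1} with $n=N-1$ to the next factor $\mathcal{R}_{0,N-1}$. Iterating this move $N$ times, all the quantum factors become twisted, and the intermediate twist reassembles itself into $\mathfrak{F}_{0,1,\ldots,N}=\mathfrak{F}_{012\ldots N}$, which cancels the inverse on the right and yields precisely $\tilde{\mathfrak{T}}_{0,12\ldots N}$.

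The step that most requires care is verifying that the indexing of Proposition \ref{lemma1} is compatible with the configuration at every intermediate stage. This is really a bookkeeping argument: the label $0$ migrates monotonically from the rightmost to the leftmost slot of the twist's subscript, and at every stage it sits immediately to the right of the index needed for the next application of the proposition. The only subtle ingredient is that at each stage the relevant composite twist must be read as an $(N+1)$-twist assembled via the generalized co-cycle identity (\ref{gencoc}); since this identity is what defines $\mathfrak{F}_{12\ldots N0}$ and $\mathfrak{F}_{012\ldots N}$ in the first place, and Proposition \ref{lemma1} is already written so as to accommodate an ambient block of spectator indices, no extra input beyond careful index tracking is needed to conclude.
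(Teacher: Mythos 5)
Your proposal is correct and follows essentially the same route as the paper: the paper's proof likewise iterates Proposition \ref{lemma1} starting from $n=N$ down to $n=1$, converting each ${\cal R}_{0n}$ into $\tilde{\cal R}_{0n}$ while the label $0$ migrates leftward through the subscript of the composite twist, until the residual twist becomes ${\mathfrak F}_{012\ldots N}$ and cancels against ${\mathfrak F}^{-1}_{012\ldots N}$.
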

\begin{proof}
The proof of this Proposition follows exactly the same logic as the proof of Proposition \ref{Drin1}. 
We start with the LHS of (\ref{main2}), and use Proposition \ref{lemma1}:
\begin{eqnarray}
&& {\mathfrak F}_{12...N0}\ {\cal R}_{0N}\  {\cal R}_{0N-1} \ldots {\cal R}_{01}\ {\mathfrak F}^{-1}_{012..N}  = \nonumber\\
&&  \tilde {\cal R}_{0N}\  {\mathfrak F}_{12...N-1 0 N}\ {\cal R}_{0N-1} \ldots {\cal R}_{01}\ {\mathfrak F}^{-1}_{012..N}  = 
\ldots \nonumber\\
&& \tilde {\cal R}_{0N}\  \tilde {\cal R}_{0N-1} \ldots \tilde {\cal R}_{01}\ {\mathfrak F}_{012...N-1 N}\
 {\mathfrak F}^{-1}_{012..N}  \equiv \tilde {\mathfrak T}_{0.12...N}. \nonumber 
\end{eqnarray}
\end{proof}

\subsection{Quasi-admissible twists for set-theoretic solutions}

\noindent We are focusing henceforth on $R$ matrices acting of tensor products of finite ${\cal N}$ dimensional spaces $V,$ 
 i.e. $R: V \otimes V \to V \otimes V.$ 
Our main aim is to identify the explicit form of Drinfeld's twists for finite, 
involutive, non-degenerate, set-theoretic solutions  of the YBE coming from braces,
and also show that they are admissible. We are employing the term {\it quasi-admissible twist} due to the fact
 that we only show here the admissibility of the set-theoretic twists, 
i.e. the validity of the co-cycle condition,
however we are not exploring the action of a co-unit on such twists. 
We make some preliminary remarks at the end of the manuscript illustrating the singular nature of 
set-theoretic twists via a simple example.

More specifically, in the analysis that follows we focus on finite, involutive, non-degenerate, set-theoretic solutions of the 
YBE given by $r ={\cal P} \check r$, where\\ 
$\check r = \sum_{x,y, \in X} e_{x,\sigma_{x}(y)} \otimes e_{y, \tau_{y}(x)}$.
We review below the fundamental constraints emerging from the YBE for such solutions. 
\\

\noindent {\it {\bf Constraints for set-theoretic solutions of the YBE.}} {\it For any finite, non-degenerate, involutive, 
set-theoretic solution of the YBE $ r = {\cal P} \check r$, where\\
$\check r = \sum_{x,y \in X} e_{x, \sigma_x(y)} \otimes e_{y, \tau_y(x)}$, the following conditions hold:
\begin{equation}
 \sigma_{\sigma_{ \eta}(\hat x)}(\sigma_{\tau_{x}(\eta)}(y)) =\sigma_{\eta}(\sigma_{x}({y}))  \quad \mbox{and} \quad 
\tau_{\tau_{y}(x)}(\tau_{\sigma_{x}(y)}(\eta)) = \tau_{y}(\tau_{x}(\eta)), \label{A0}
\end{equation}
$\forall\ \eta, x, y \in X$: $\sigma_{\eta}(x),\ \tau_{y}(x)$ are fixed.}
\begin{proof}
It is convenient for our purposes here to express the Yang-Baxter equation in the following form:
\begin{equation}
\check r_{23}\ T_{1,23}  =   T_{1,23}\ \check r_{23} \label{yb2}
\end{equation}
(similarly $\check r_{12} \ T_{12,3} =   T_{12,3}\ \check r_{12}$), where we define,
$T_{1,23}=  r_{13}\  r_{12}$, 
($T_{12,3} = r_{13}\  r_{23}$) and recall  $\check r = {\cal P} r.$ 

We take the following steps.
\begin{enumerate}
\item  We first  identify $T_{1,23}= r_{13} r_{12}$ , where $r ={\cal P} \check r$  and $\check r$ is the set-theoretic solution
\begin{equation}
T_{1,23} = \sum_{x,y,\eta \in X} e_{\tau_{y}(x), \sigma_{\eta}(x) } \otimes 
e_{\eta, \tau_x(\eta)} \otimes e_{\sigma_x(y), y}. \label{co1}
\end{equation}
\item We calculate the left and right hand side of (\ref{yb2}), using also the involutive property, 
\begin{eqnarray}
LHS & =&  \sum_{x, y,\eta \in X}e_{\tau_{y}(x), \sigma_{\eta}(x)} \otimes 
e_{\sigma_{\eta}(\sigma_x(y)), \tau_x(\eta)} \otimes e_{\tau_{\sigma_x(y)}(\eta), y} 
\label{s1}\\
RHS &=&  \sum_{x, y,\eta \in X}e_{\tau_{y}(x), \sigma_{\eta}(x)}  \otimes 
e_{\eta, \sigma_{\tau_x(\eta)}(y)} \otimes e_{\sigma_x(y), \tau_y(\tau_x(\eta))}. \label{s2}
\end{eqnarray}
\item We identify the constraints emerging from (\ref{yb2}) by equating expressions (\ref{s1}) and  (\ref{s2}). 
Equivalence of expression (\ref{s1}) and (\ref{s2}) leads to
\begin{eqnarray}
&& \eta =   \sigma_{\hat \eta}(\sigma_{\hat x}(\hat y)),  \quad \sigma_{x}(y) = 
\tau_{\sigma_{\hat x}(\hat y)}(\hat \eta)  \quad \mbox{and} \label{cc1}\\
&& \tau_x(\eta) = \sigma_{\tau_{\hat x}(\hat \eta)}(\hat y), \quad y = \tau_{\hat y}(\tau_{\hat x}(\hat \eta)). \label{cc2}
\end{eqnarray}
Note that relations (\ref{cc1}) are equivalent to: $\sigma_{\eta}(\sigma_x(y)) = \hat \eta,$ 
and $\tau_{\sigma_x(y)}(\eta) = \sigma_{\hat x}(\hat y)$. 
We also require
\begin{equation}
\sigma_{\eta}(x) = \sigma_{\hat \eta}(\hat x) \quad \mbox{and}  \quad \tau_{y}(x) = 
\tau_{\hat y}(\hat x). \label{cc3}
\end{equation}

Conditions (\ref{cc1})-(\ref{cc3}) guarantee the equivalence between 
the LHS and RHS of (\ref{yb2}) given in (\ref{s1}), (\ref{s2}).

\item We consider together (\ref{cc2}), (\ref{cc3}) and obtain $\sigma_{\eta}(x) =  
\sigma_{\hat \eta}(\hat x), \ \tau_x(\eta) = \sigma_{\tau_{\hat x}(\hat \eta)}(\hat y)$, 
which leads to $\eta = \sigma_{\sigma_{\hat \eta}(\hat x)}(\sigma_{\tau_{\hat x}(\hat \eta)}(\hat y))$,
then comparing with (\ref{cc1}) we obtain the first fundamental constraint:
\begin{equation}
C_1 \equiv \sigma_{\sigma_{\hat \eta}(\hat x)}(\sigma_{\tau_{\hat x}(\hat \eta)}(\hat y)) - 
\sigma_{\hat \eta}(\sigma_{\hat x}({\hat y})) =0, \label{A}
\end{equation}
and conversely if $C_1 =0$ then $\sigma_{\eta}(x) = \sigma_{\hat \eta}(\hat x)$ given (\ref{cc1}), (\ref{cc2}).

Similarly, from (\ref{cc1}), (\ref{cc3}) $\sigma_x(y) = \tau_{\sigma_{\hat x}(\hat y)}(\hat \eta)$, 
$\tau_{y}(x) = \tau_{\hat y}(\hat x)$, which leads to 
$y = \tau_{\tau_{\hat y}(\hat x)}(\tau_{\sigma_{\hat x}(\hat y)}(\hat \eta))$, 
and comparison with (\ref{cc2}) leads to the second fundamental constraint:
\begin{equation}
C_2 \equiv  \tau_{\tau_{\hat y}(\hat x)}(\tau_{\sigma_{\hat x}(\hat y)}(\hat \eta)) -
\tau_{\hat y}(\tau_{\hat x}(\hat \eta)) =0, \label{B}
\end{equation}
and conversely if $C_2 =0$ then $\tau_{\hat y}(\hat x) =\tau_{y}(x),$ given (\ref{cc1}), (\ref{cc2}).
\end{enumerate}

For the brace solution of the YBE, which is our main interest in this study, 
the above constraints are satisfied as shown by Rump.
\end{proof}
We derive in the following Lemma the explicit forms of the co-products of 
the quantum algebra associated to brace solutions of the YBE. 
This will be of significance when identifying the corresponding admissible Drinfeld twists.
\begin{lemma}{\label{cor0}}
Let  $\check r$ be an involutive, finite, non-degenerate set-theoretic solution of the YBE and 
$T_{1,23} = \sum_{x,y \in X}e_{x,y} \otimes \Delta({\cal L}_{x,y})$, where ${\cal L}_{x,y}$ are 
the represented elements of the corresponding quantum algebra defined by the RTT relation (\ref{RTT}) and
$r = {\cal P} \check r$ $({\cal P}$ is the permutation operator). 
Then 
\begin{equation}
\Delta({\cal L}_{x,y})\ \check r =   \check r\ \Delta({\cal L}_{x,y})  \quad \forall x,\ y \in X, \label{cor1}
\end{equation}
subject to (\ref{A}), (\ref{B}).
\end{lemma}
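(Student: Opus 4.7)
The plan is to derive the stated commutation as an immediate consequence of the YBE written in the form $\check r_{23} T_{1,23} = T_{1,23} \check r_{23}$, i.e.\ equation (\ref{yb2}), whose validity for brace solutions has already been reduced to the constraints (\ref{A}), (\ref{B}) in the preceding argument. The whole point of Lemma \ref{cor0} is to repackage that equality as a statement about each coproduct $\Delta({\cal L}_{x,y})$ separately.

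First, I would invoke the FRT definition of the coproduct, recalling that $T_{1,23}$ admits the expansion
\[
T_{1,23} = \sum_{x,y \in X} e_{x,y} \otimes \Delta({\cal L}_{x,y}),
\]
so that $\Delta({\cal L}_{x,y}) \in \mbox{End}(V) \otimes \mbox{End}(V)$ is by definition the coefficient of the matrix unit $e_{x,y}$ in the first auxiliary space. Comparing with the explicit expression (\ref{co1}), one could also read off $\Delta({\cal L}_{a,b})$ as the sum of tensors $e_{\eta,\tau_x(\eta)} \otimes e_{\sigma_x(y),y}$ over triples $(x,y,\eta)\in X^3$ with $\tau_y(x)=a$ and $\sigma_\eta(x)=b$, but this explicit form is not needed for the argument.

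Next, I would substitute this expansion into both sides of (\ref{yb2}) and note that $\check r_{23}$ commutes trivially with any operator supported on the first auxiliary space, which gives
\[
\sum_{x,y \in X} e_{x,y} \otimes \bigl(\check r\, \Delta({\cal L}_{x,y})\bigr) = \sum_{x,y \in X} e_{x,y} \otimes \bigl(\Delta({\cal L}_{x,y})\, \check r\bigr).
\]
By linear independence of the matrix units $\{e_{x,y}\}_{x,y\in X}$ in $\mbox{End}(V)$, one equates coefficients in the first tensor slot to conclude $\check r\, \Delta({\cal L}_{x,y}) = \Delta({\cal L}_{x,y})\, \check r$ for every $x,y\in X$. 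The hypotheses (\ref{A}), (\ref{B}) enter only through the validity of (\ref{yb2}), which has already been established for brace solutions.

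There is no genuine obstacle here; the main work was done upstream in reducing (\ref{yb2}) to the constraints (\ref{A}), (\ref{B}). The conceptual point is simply that $T_{1,23}$, viewed as an operator on the first auxiliary space whose entries lie in $\mbox{End}(V\otimes V)$, serves as a generating function for all coproducts at once, so commutation of $T_{1,23}$ with $\check r_{23}$ translates entrywise into commutation of each $\Delta({\cal L}_{x,y})$ with $\check r$.
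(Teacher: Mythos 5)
Your proposal is correct and follows essentially the same route as the paper: both read $\Delta(\mathcal{L}_{x,y})$ off as the coefficient of the matrix unit $e_{x,y}$ in the first auxiliary space of $T_{1,23}$ (cf.\ (\ref{co1})) and then extract the commutation relation entrywise from $\check r_{23}\,T_{1,23}=T_{1,23}\,\check r_{23}$, with the constraints (\ref{A}), (\ref{B}) entering only through the validity of that identity. Your explicit appeal to linear independence of the $e_{x,y}$ merely spells out what the paper leaves implicit.
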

\begin{proof}
The proof is an immediate consequence of the constraints emerging for set-theoretic solutions of the YBE. 
Indeed, from (\ref{co1}) we can read of 
\begin{equation}
\Delta({\cal L}_{\tau_y(x), \sigma_{\eta}(x)}) =  \sum_{x,y,\eta \in W}
e_{\eta, \tau_x(\eta)} \otimes e_{\sigma_x(y), y},
\end{equation}
where {\it $W: x,y,\eta \in X$ such that $\sigma_{\eta}(x)$ and $\tau_{y}(x)$ are fixed}. \\
Then due to $ \check r_{23}\  T_{1,23}\ =T_{1,23}\ \check r_{23} $ 
we obtain  (\ref{cor1}). i.e. the entries  of the $T_{1,23}$-matrix commute with $\check r$, subject to (\ref{A}), (\ref{B}).
\end{proof}
\noindent It was shown in \cite{DoiSmo2} Proposition 3.3 that any brace solution of the YBE
can be obtained from the permutation operator via a similarity transformation, i.e. 
$\check r_{12} = {\cal F}^{-1}_{12}{\cal P}_{12} {\cal F}_{12}$ 
\footnote{The twist of the present study is the inverse of ${\cal F}$ in 
\cite{DoiSmo2}. It is just a matter of convention}.

\begin{pro} \label{Drintw1} Let $\check r = \sum_{x,y \in X} e_{x, \sigma_x(y)} \otimes e_{y, \tau_y(x)}$
 be the brace solution of the braid YBE.  Let also $V_{k}$, $k \in \{1, \ldots, {\cal N}^2 \}$ be the eigenvectors 
of the permutation operator ${\cal P} = \sum_{x,y\in X} e_{x,y} \otimes e_{y,x}$, and  $\hat V_{k}$, $k \in \{1, \ldots, {\cal N}^2 \}$ 
be the eigenvectors of the brace  $\check r$-matrix.
Then the $\check r$-matrix can be expressed as  a Drinfeld twist, such that 
$\check r={\cal F}^{-1} {\cal P} {\cal F}$, where the twist ${\cal F}^{-1}$ is expressed as 
${\cal F}^{-1} = \sum_{k=1}^{{\cal N}^2} \hat V_k\  V_k^T$  \cite{DoiSmo2}.
\end{pro}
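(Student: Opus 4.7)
The proof plan rests on combining two ingredients: the existence of the similarity transformation $\check r = \mathcal{F}^{-1} \mathcal{P} \mathcal{F}$, which is guaranteed by Proposition~3.3 of \cite{DoiSmo2}, together with the spectral decomposition of the permutation operator. First I would note that $\mathcal{P}$ is a real symmetric involution on the $\mathcal{N}^2$-dimensional space $V \otimes V$, so it admits an \emph{orthonormal} basis $\{V_{k}\}_{k=1}^{\mathcal{N}^{2}}$ of eigenvectors with eigenvalues $\pm 1$; concretely the $+1$-eigenspace is spanned by the normalized symmetric tensors $(e_{x}\otimes e_{y}+e_{y}\otimes e_{x})/\sqrt{2}$ and the $-1$-eigenspace by the antisymmetric ones. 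In particular $\sum_{k}V_{k}V_{k}^{T}=I_{V\otimes V}$.

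Next I would exploit the intertwining relation $\check r \,\mathcal{F}^{-1}=\mathcal{F}^{-1}\mathcal{P}$ and \emph{define} $\hat V_{k}:=\mathcal{F}^{-1}V_{k}$. This is automatically an eigenvector of $\check r$ with the same eigenvalue as $V_{k}$, since
$$\check r \hat V_{k}=\check r\,\mathcal{F}^{-1}V_{k}=\mathcal{F}^{-1}\mathcal{P}V_{k}=\lambda_{k}\mathcal{F}^{-1}V_{k}=\lambda_{k}\hat V_{k},$$
and invertibility of $\mathcal{F}^{-1}$ ensures that $\{\hat V_{k}\}$ is again a basis of $V\otimes V$. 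The claimed formula then follows by inserting the resolution of the identity:
$$\mathcal{F}^{-1}=\mathcal{F}^{-1}\sum_{k}V_{k}V_{k}^{T}=\sum_{k}\bigl(\mathcal{F}^{-1}V_{k}\bigr)V_{k}^{T}=\sum_{k=1}^{\mathcal{N}^{2}}\hat V_{k}\,V_{k}^{T}.$$

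The main subtlety — rather than obstacle — is that the expression $\sum_{k}\hat V_{k}V_{k}^{T}$ presupposes a definite pairing between eigenvectors of $\check r$ and of $\mathcal{P}$. This pairing is not extra data: it is fixed by $\mathcal{F}^{-1}$ through the assignment $\hat V_{k}=\mathcal{F}^{-1}V_{k}$. Conversely, one can run the argument backwards and verify directly that for any choice of eigenbases of $\check r$ and $\mathcal{P}$ matched eigenvalue by eigenvalue (which is possible because the similarity from \cite{DoiSmo2} forces the $\pm1$-eigenspaces to have equal dimensions for the two operators), the operator $\mathcal{F}^{-1}:=\sum_{k}\hat V_{k}V_{k}^{T}$ satisfies $\mathcal{F}^{-1}\mathcal{P}=\check r \mathcal{F}^{-1}$ on each basis vector, hence $\check r=\mathcal{F}^{-1}\mathcal{P}\mathcal{F}$. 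The residual freedom in choosing orthonormal bases within each eigenspace reflects the well-known non-uniqueness of the Drinfeld twist.
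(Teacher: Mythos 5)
Your argument is correct and follows essentially the same route as the paper: both reduce to the fact that $\mathcal{P}$ and $\check r$ are involutions whose $\pm 1$ eigenspaces have matching dimensions, use the orthonormality $\sum_{k}V_{k}V_{k}^{T}=I$, and verify that $\mathcal{F}^{-1}:=\sum_{k}\hat V_{k}V_{k}^{T}$ intertwines the two operators (your closing "backwards" paragraph is exactly the paper's argument). The only difference is one of emphasis: the paper exhibits the eigenvectors of $\check r$ explicitly --- the $\mathcal{N}$ fixed points $(x,y)=(\sigma_{x}(y),\tau_{y}(x))$ and the combinations $\hat e_{x}\otimes\hat e_{y}\pm\hat e_{\sigma_{x}(y)}\otimes\hat e_{\tau_{y}(x)}$ --- so the equality of multiplicities is checked directly and the explicit form of $\mathcal{F}$ can be extracted in the next proposition, whereas you infer the equal multiplicities from the cited similarity result.
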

From the proof of the above Proposition (Proposition 3.3 in \cite{DoiSmo2}) we know that the eigenvectors 
of  the permutation operator ${\cal P}$ and the set-theoretic braid solution $\check r$ are given as follows. 
Let $\hat e_k,\  k \in \{ 1, \ldots, {\cal N}\}$ be the ${\cal N}$ 
dimensional column vector with 1 in the $k^{th}$ position and 0 elsewhere, 
i.e. $\hat e_{k}$ form a basis of the ${\cal N}$ dimensional vector space.
\begin{enumerate}
\item The (normalized) eigenvectors of the permutation operator are ($x,\ y \in X$):
\begin{eqnarray}
&& V_{k} =  \hat e_x \otimes \hat e_x, \quad  k \in \big \{1, \ldots,  {\cal N}\big \} \nonumber\\
&& V_{k} = {1\over \sqrt{2}} \big ( \hat e_x \otimes \hat e_y + \hat e_y \otimes \hat e_x \big),    \quad x\neq y,
\quad k \in \big \{{\cal N}+1, \ldots, {{\cal N}^2 + {\cal N} \over 2}\big \}, \nonumber\\
&&  V_{k} = {1\over \sqrt{2}} \big ( \hat e_x \otimes \hat e_y - \hat e_y \otimes \hat e_x \big),    \quad x\neq y,
\quad k \in \big \{ {{\cal N}^2 + {\cal N} \over 2}+1, \ldots, {\cal N }^2 \big \}. \label{0th}
\end{eqnarray}
The first ${{\cal N}^2 + {\cal N} \over 2}$ eigenvectors  have the same eigenvalue $1$, while the
 rest ${{\cal N}^2 - {\cal N} \over 2}$ eigenvectors  have eigenvalue $-1$. Also it is easy to check that $V_k$ 
form an ortho-normal basis for the ${\cal N}^2$ dimensional space. Indeed, 
$V^T_k V_l = \delta_{kl}$ and 
$\sum_{k=1}^{{\cal N}^2} V_{k}V_{k}^T = I_{{\cal N}^2}$ ($^T$ denotes usual transposition).

\item The eigenvectors of the $\check r$-matrix are
\begin{eqnarray}
&& \hat  V_{k} =  \hat e_x \otimes \hat e_y,  \quad (x, y) = (\sigma_{x}(y), \tau_y(x)), \quad  k \in \big \{1, \ldots,  {\cal N}\big\} \nonumber\\
&& \hat V_{k} = {1\over \sqrt{2}} \big ( \hat e_x \otimes \hat e_y + \hat e_{\sigma_x(y)} \otimes \hat e_{\tau_y(x)} \big),   
~~~~~~k \in \big \{{\cal N}+ 1, \ldots, {{\cal N}^2 + {\cal N} \over 2}\big \}, \nonumber\\
&&  \hat V_{k} = {1\over \sqrt{2}} \big ( \hat e_x \otimes \hat e_y - \hat e_{\sigma_x(y)} \otimes \hat e_{\tau_y(x)}\big),   
~~~~(x,y )\neq (\sigma_x(y), \tau_y(x)), \nonumber \\
&& k \in \big \{ {{\cal N}^2 + {\cal N} \over 2}+1, \ldots, {\cal N }^2 \big \} \label{1st}
\end{eqnarray}
As in the case of the permutation operator the $\check r$ matrix has the same eigenvalues
$1$ and $-1$ and the same multiplicities, 
${{\cal N}^2 + {\cal N} \over 2}$ and  ${{\cal N}^2 - {\cal N} \over 2}$ respectively. 
Hence, the two matrices are similar, i.e. there exists some invertible ${\cal F} \in \mbox{End}({\mathbb C}^{\cal N}) \otimes \mbox{End}({\mathbb C}^{\cal N})$ 
(not uniquely defined) such that $\check r = {\cal F}^{-1} {\cal P} {\cal F}$, where ${\cal F}^{-1} = \sum_{k=1}^{{\cal N}^2} \hat V_k\  V_k^T.$
\end{enumerate}

From Proposition 3.3 in \cite{DoiSmo2} we can extract explicit forms for the twist ${\cal F}$ and state the following.

\begin{pro}{\label{twistlocal}} Let ${\cal F}^{-1} = \sum_{k=1}^{{\cal N}^2} \hat V_k\  V_k^T$ 
be the similarity transformation (twist) of Proposition \ref{Drintw1}, such that $\check r = {\cal F}^{-1} {\cal P} {\cal F},$
where $\check r = \sum_{x,y \in X} e_{x, \sigma_x(y)} \otimes e_{y, \tau_y(x) }$ is the brace solution of the braid YBE,  ${\cal P}$ is the permutation 
operator and $\hat V_k,\ V_k$ are their respective eigenvectors.  Then the twist can be explicitly expressed as 
${\cal F} = \sum_{x,y \in X} e_{x,x} \otimes e_{\sigma_{x}(y), y}$. 
\end{pro}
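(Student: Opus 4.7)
The cleanest route is direct verification of the intertwining ${\cal F}\check r={\cal P}{\cal F}$, which is equivalent to $\check r={\cal F}^{-1}{\cal P}{\cal F}$. Evaluating the spectral sum ${\cal F}^{-1}=\sum_k \hat V_k V_k^T$ by plugging in the explicit vectors (\ref{0th}) and (\ref{1st}) and resumming the symmetric and antisymmetric contributions is possible but cumbersome; worse, that sum depends on a choice of pairing between ${\cal P}$-eigenvectors and $\check r$-eigenvectors inside each common eigenspace, so ${\cal F}$ is not uniquely determined by Proposition \ref{Drintw1}. The statement amounts to asserting that one admissible representative of this equivalence class admits the compact closed form $\sum_{x,y}e_{x,x}\otimes e_{\sigma_x(y),y}$, and verifying this directly sidesteps the pairing ambiguity altogether.

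First I would read off the action of the proposed ${\cal F}$ on a basis: using $e_{a,b}\hat e_c=\delta_{bc}\hat e_a$ one obtains ${\cal F}(\hat e_u\otimes \hat e_v)=\hat e_u\otimes \hat e_{\sigma_u(v)}$ for every $u,v\in X$. Non-degeneracy (bijectivity of $\sigma_u$) makes ${\cal F}$ manifestly invertible, with ${\cal F}^{-1}(\hat e_u\otimes \hat e_v)=\hat e_u\otimes \hat e_{\sigma_u^{-1}(v)}$. Next I would compute $\check r(\hat e_u\otimes \hat e_v)$ directly from (\ref{brace1}); the involutive identity $\check r(\sigma_x(y),\tau_y(x))=(x,y)$ immediately yields $\check r(\hat e_u\otimes \hat e_v)=\hat e_{\sigma_u(v)}\otimes \hat e_{\tau_v(u)}$.

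The remaining step is then a one-line calculation. Applying ${\cal F}$ to $\check r(\hat e_u\otimes \hat e_v)$ gives $\hat e_{\sigma_u(v)}\otimes \hat e_{\sigma_{\sigma_u(v)}(\tau_v(u))}$, which collapses to $\hat e_{\sigma_u(v)}\otimes \hat e_u$ by the involutive identity $\sigma_{\sigma_u(v)}(\tau_v(u))=u$; on the other side, ${\cal P}{\cal F}(\hat e_u\otimes \hat e_v)={\cal P}(\hat e_u\otimes \hat e_{\sigma_u(v)})=\hat e_{\sigma_u(v)}\otimes \hat e_u$. Since the two agree on every basis element of $V\otimes V$, the operator identity ${\cal F}\check r={\cal P}{\cal F}$ holds, and the proposition follows. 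The only conceptual point is to recognize that involutivity is precisely the mechanism that makes the second tensor factor collapse to $\hat e_u$; once that is in view, there is no serious obstacle and the claimed closed form for ${\cal F}$ drops out.
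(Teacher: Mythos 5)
Your proof is correct, but it takes a genuinely different route from the paper's. The paper proves the proposition by actually evaluating the spectral sum: it re-indexes the eigenvectors $V_k$ of ${\cal P}$ (replacing the pair $(x,y)$ by $(x,\sigma_x(y))$, using bijectivity of $\sigma_x$ and the observation that $x=\sigma_x(y)$ forces the fixed-point case), resums the diagonal, symmetric and antisymmetric contributions to $\sum_k \hat V_k V_k^T$, and lands directly on ${\cal F}^{-1}=\sum_{x,y}e_{x,x}\otimes e_{y,\sigma_x(y)}$, from which the stated ${\cal F}$ follows by inverting (a transposition, since ${\cal F}$ is a permutation matrix). You instead posit the closed form and verify the intertwining relation ${\cal F}\check r={\cal P}{\cal F}$ on basis vectors, with involutivity ($\sigma_{\sigma_u(v)}(\tau_v(u))=u$) doing the collapsing; every step of that computation checks out, and non-degeneracy correctly gives invertibility. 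What your route buys is brevity and a clean isolation of exactly which axiom (involutivity) makes the formula work; what it gives up is the identification of the closed form with the \emph{specific} operator $\sum_k \hat V_k V_k^T$ built from the paper's chosen eigenvector pairing --- you prove that the stated ${\cal F}$ is \emph{an} intertwiner, not that it is \emph{that} one. You flag this explicitly and justify it by the acknowledged non-uniqueness of the similarity transformation (the paper itself says ``not uniquely defined''), which is a fair reading of the proposition's content; just be aware that the paper's computation does establish the stronger, literal identification for its particular pairing.
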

\begin{proof}
We begin our proof by  re-expressing the eigenvectors of the permutation operator in a convenient for our purposes form.
The first ${\cal N}$ eigenvectors are used as they are, we only conveniently re-express the rest  (let $n = {{\cal N}^2 +{\cal N} \over 2}$):
\begin{eqnarray}
&& V_{k} =  \hat e_x \otimes \hat e_x, \quad  k \in \big \{1, \ldots,  {\cal N}\big \} \nonumber\\
&& V_{k} = {1\over \sqrt{2}} \big ( \hat e_x \otimes \hat e_{\sigma_x(y)} + \hat e_{\sigma_x(y)} \otimes \hat e_x \big),    
~~ x\neq \sigma_{x}(y),~~~ k \in \big \{{\cal N}+1, \ldots, n\big \}, \nonumber\\
&&  V_{k} = {1\over \sqrt{2}} \big ( \hat e_x \otimes \hat e_{\sigma_{x}(y)} - \hat e_{\sigma_{x}(y)} \otimes \hat e_x \big),   
~~ x\neq \sigma_{x}(y),
~~ k \in \big \{ n+1, \ldots, {\cal N }^2 \big \}. \label{2nd}
\end{eqnarray}

Let us first mention that if $x = \sigma_x(y)$ then  $y = \tau_{x}(y)$ (and vice versa). 
Indeed, this can be show as follows: 
let $\hat y =\tau_{y}(x),$ then due to involution we obtain $x = \sigma_x(\hat y)$ (and 
$y = \tau_{\hat y}(x)$), but due to the fact that $\sigma_x(y)$ is a bijection, and also 
$x = \sigma_x(y),$ we conclude that $\hat y = y$.  We may now compute ${\cal F}^{-1}$ 
using the explicit expressions of the eigenvectors of $\check r$ and ${\cal P}$  
(\ref{1st}), (\ref{2nd})  (recall $n = {{\cal N}^2 +{\cal N} \over 2}$) $x,\ y \in X$: 
\begin{eqnarray}
{\cal F}^{-1} &=& \sum_{k=1}^{{\cal N}} \hat V_k\  V_k^T  + \sum_{k={\cal N}+1 }^{n} \hat V_k\  V_k^T  
+ \sum_{k=n+1}^{{\cal N}^2} \hat V_k\  V_k^T \nonumber\\
&=& \sum_{x = \sigma_{x}(y)} (\hat e_x \otimes \hat e_y)(\hat e_x^T \otimes \hat e_x^T) \nonumber\\ 
&+& {1\over 2} \sum_{x \neq \sigma_x(y)}  (\hat e_x \otimes \hat e_y)(\hat e_{x}^T \otimes \hat e_{\sigma_x(y)}^T) +  
{1\over 2} \sum_{x \neq \sigma_x(y)}  (\hat e_{\sigma_x(y)} \otimes \hat
 e_{\tau_y(x)})(\hat e_{\sigma_x(y)}^T \otimes \hat e_x^T) \nonumber\\
&=& \sum_{x = \sigma_x(y)} e_{x,x} \otimes e_{y,x} + \sum_{x \neq  \sigma_x(y)} e_{x,x} 
\otimes e_{y,\sigma_{x}(y)}  \nonumber\\ &= & \sum_{x , y \in X} e_{x,x} \otimes e_{y,\sigma_{x}(y)} \nonumber
\end{eqnarray}
From the above expression and due to the fact that $\sigma_x$ and $\tau_y$ are bijections, we conclude that
$ {\cal F}=  \sum_{x,y \in X} e_{x,x} \otimes e_{\sigma_{x}(y), y}$,  
(${\cal F}_{12}^{-1} = {\cal F}_{12}^{T},$ where $^T$ 
denotes total transposition in both spaces).

Recall that  $r = {\cal P} \check r,$ we also confirm by direct computation, 
and using the fact that $\sigma_x,\ \tau_y$ 
are bijections that $({\cal F }^{(op)})^{-1} {\cal F}= 
\sum_{x, \in X} e_{y, \sigma_x(y)} \otimes e_{x, \tau_y(x)}  =r$.
\end{proof}

\begin{rem}{\label{rg}} The twist is not uniquely defined, for instance an alternative twist is of the form 
${\cal G} = \sum_{x,y\in X}e_{\tau_{y}(x), x} \otimes e_{y,y} $, 
and $\sum_{x, \in X} e_{y, \sigma_x(y)} \otimes e_{x, \tau_y(x)}= {\cal G}_{21}^{-1} {\cal G}_{12}$. 
This is  shown by direct computation.
\end{rem}

\begin{rem}{} The Baxterized  solution of the YBE, $R(\lambda) = \lambda   r + {\cal P}$ (recall ${\cal P}$ is the permutation operator) 
can be also expressed as $R_{12}(\lambda)= {\cal F}^{-1}_{21} \tilde R_{12}(\lambda) {\cal F}_{12}$, 
where $\tilde R$ is the Yangian. This is a straightforward statement due to the form of the Yangian 
$\tilde R(\lambda) = \lambda{\mathbb I} + {\cal P}$, 
and due to the fact that ${\cal F}_{21}^{-1} {\cal P}_{12} {\cal F}_{12} = {\cal P}_{12}$.
\end{rem}

We  now focus on the derivation of the 3-twist ${\cal F}_{012}$ and the co-cycle condition. 
We will now identify the 3-twist for any finite, involutive, non-degenerate set-theoretic solution and inspired by 
Propositions (\ref{lemma1})-(\ref{Drinfeld3}) will show that a suitable co-cycle condition  
is satisfied, i.e. ${\cal F}$ is an admissible twist. This is achieved in the following Propositions.

\begin{pro}{\label{proco1}} In accordance to the co-multiplication of the quantum algebra as defined in Lemma \ref{cor0}, 
we define:
\begin{eqnarray}
&& {\cal F}_{0,12} = \sum_{x, y, \eta \in X} e_{\sigma_{\eta}(x), \sigma_{\eta}(x)} \otimes e_{\eta, \tau_x(\eta)}\
 \otimes e_{\sigma_{x}(y), y} \vert_{C_1=0} \label{cop1}\\
&& {\cal F}_{01,2} =\sum_{x, y, \eta \in X} e_{\sigma_{\eta}(x), \sigma_{\eta}(x)} \otimes e_{\tau_x(\eta), \tau_x(\eta) }\ 
\otimes e_{\sigma_{\eta}(\sigma_{x}(y)), y} \vert_{C_1=0} \label{cop2}
\end{eqnarray}
where $C_1 = \sigma_{\sigma_{\eta}(x)}(\sigma_{\tau_{x}(y)}(y)) - \sigma_{\eta}(\sigma_x(y))$. 
Let also $\check r = \sum_{x, y \in X } e_{x, \sigma_{x}(y)} \otimes e_{y, \tau_{y}(x)}$, then 
\begin{equation}
 \check r_{01} {\cal F}_{01,2} = {\cal F}_{01,2} \check r_{01},\  
\quad  \check r_{12} {\cal F}_{0,12} = {\cal F}_{0,12}\check r_{12}. \label{comut1}
\end{equation}
\end{pro}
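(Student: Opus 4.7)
The plan is to establish the two commutation relations separately, leveraging Lemma \ref{cor0} for the first and a direct matrix computation for the second.

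For $\check r_{12}\, {\cal F}_{0,12} = {\cal F}_{0,12}\, \check r_{12}$, I would regroup the defining sum of ${\cal F}_{0,12}$ according to the values $a = \sigma_\eta(x)$ and $b = \tau_y(x)$. Since the first tensor factor $e_{\sigma_\eta(x),\sigma_\eta(x)}$ depends only on $a$, one obtains
\begin{equation*}
{\cal F}_{0,12} \;=\; \sum_{a,b \in X} e_{a,a} \otimes \Delta({\cal L}_{b,a}),
\end{equation*}
where $\Delta({\cal L}_{b,a}) = \sum_{(x,y,\eta) \in W_{a,b}} e_{\eta,\tau_x(\eta)} \otimes e_{\sigma_x(y),y}$, with $W_{a,b} = \{(x,y,\eta): \sigma_\eta(x) = a,\ \tau_y(x) = b\}$, is precisely the represented coproduct identified in Lemma \ref{cor0}. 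That lemma gives $\Delta({\cal L}_{b,a})\, \check r = \check r\, \Delta({\cal L}_{b,a})$; embedded into the second and third tensor slots, this yields the desired commutation with $\check r_{12}$ term by term.

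For $\check r_{01}\, {\cal F}_{01,2} = {\cal F}_{01,2}\, \check r_{01}$, I would compute both sides directly using $e_{a,b} e_{c,d} = \delta_{b,c} e_{a,d}$. On the left, the diagonal structure of the first two factors of ${\cal F}_{01,2}$ combined with the Kronecker constraints from $\check r_{01}$ forces $(a,b) = (\eta,x)$ by involutivity and non-degeneracy (the equality $\check r(a,b) = \check r(\eta,x)$ implies $(a,b)=(\eta,x)$), producing
\begin{equation*}
\check r_{01}\, {\cal F}_{01,2} \;=\; \sum_{x,y,\eta \in X} e_{\eta,\sigma_\eta(x)} \otimes e_{x,\tau_x(\eta)} \otimes e_{\sigma_\eta(\sigma_x(y)),y}.
\end{equation*}
On the right, the analogous computation uses the involutive identities $\sigma_{\sigma_\eta(x)}(\tau_x(\eta)) = \eta$ and $\tau_{\tau_x(\eta)}(\sigma_\eta(x)) = x$ to simplify the indices, giving
\begin{equation*}
{\cal F}_{01,2}\, \check r_{01} \;=\; \sum_{x,y,\eta \in X} e_{\sigma_\eta(x),\eta} \otimes e_{\tau_x(\eta),x} \otimes e_{\sigma_\eta(\sigma_x(y)),y}.
\end{equation*}
The two sums are then identified via the reindexing $(\hat\eta,\hat x) := (\sigma_\eta(x),\tau_x(\eta))$, $\hat y := y$, which is a bijection on $X^3$ because $(\eta,x)\mapsto\check r(\eta,x)$ is an involution. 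The first two tensor factors transform into $e_{\hat\eta,\sigma_{\hat\eta}(\hat x)} \otimes e_{\hat x,\tau_{\hat x}(\hat\eta)}$, matching the target expression, while the third-factor index $\sigma_\eta(\sigma_x(y)) = \sigma_{\sigma_{\hat\eta}(\hat x)}(\sigma_{\tau_{\hat x}(\hat\eta)}(\hat y))$ coincides with $\sigma_{\hat\eta}(\sigma_{\hat x}(\hat y))$ exactly by the brace constraint $C_1 = 0$.

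The main obstacle is purely combinatorial and lies in spotting the correct change of variables: once one recognises that the required reindexing is precisely the involution $(\eta,x)\mapsto\check r(\eta,x)$ on $X\times X$, the role of $C_1 = 0$ in absorbing the triple composition of $\sigma$'s in the third tensor slot is immediate. No YBE constraints beyond $C_1 = 0$ and the involutive identities of $\check r$ are needed for either commutation relation.
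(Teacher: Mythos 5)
Your proof is correct and follows essentially the same route as the paper: a direct matrix computation for the $\check r_{01}$ relation (the paper arrives at the same two sums and invokes $C_1=0$ to identify them, leaving implicit exactly the reindexing $(\eta,x)\mapsto \check r(\eta,x)$ that you spell out), and the coproduct structure for the $\check r_{12}$ relation (the paper re-derives the commutation from the conditions (\ref{cc1})--(\ref{cc3}) rather than citing Lemma \ref{cor0}, but the content is identical). Your regrouping ${\cal F}_{0,12}=\sum_{a,b}e_{a,a}\otimes\Delta({\cal L}_{b,a})$ is a slightly cleaner way to cash in Lemma \ref{cor0} than the paper's repeated computation, and your closing remark that only $C_1=0$ and involutivity are needed is consistent with the proposition's hypotheses.
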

\begin{proof}
We divide the proof in two parts.
\begin{enumerate}
\item We first show that $ \check r_{01} {\cal F}_{01,2} = {\cal F}_{01,2} \check r_{01}$, using the 
explicit expressions of $\check r$, ${\cal F}_{0,12}$ and ${\cal F}_{01,2}$: 
The LHS of the expression is (subject to the constraint $C_1 =0$)
\begin{eqnarray}
&& \check r_{01} {\cal F}_{01,2} = 
\sum_{\eta, x,y \in X} e_{\eta, \sigma_{\eta}(x)} \otimes e_{x, \tau_{x}(\eta)} \otimes 
e_{\sigma_{\eta}(\sigma_x(y)), y}. \nonumber
\end{eqnarray}
Similarly,
\begin{eqnarray}
&& {\cal F}_{01,2} \check r_{01} = 
\sum_{\eta, x,y \in X} e_{\eta, \sigma_{\eta}(x)} \otimes 
e_{x, \tau_{x}(\eta)} \otimes e_{\sigma_{\sigma_{\eta}(x)}(\sigma_{\tau_x(\eta)}(y)), y}, \nonumber
\end{eqnarray}
and due to the constraint $C_1=0$:
${\cal F}_{01,2} \check r_{01} = \sum_{\eta, x,y \in X} e_{\eta, \sigma_{\eta}(x)} 
\otimes e_{x, \tau_{x}(\eta)} \otimes e_{\sigma_{\eta}(\sigma_x(y)), y}$, 
which concludes the first part of our proof. 

\item We move on now to the second part of our proof, which is a bit more involved: 
$ \check r_{12} {\cal F}_{0,12} = {\cal F}_{0,12} \check r_{12}$.
By direct computation given the explicit expressions for ${\cal F}_{0,12}$ and $\check r_{12}$:
\begin{eqnarray}
&& \check r_{12} {\cal F}_{0,12} = \sum_{\eta, x, y \in X} e_{\sigma_{\eta}(x), \sigma_{\eta}(x)} 
\otimes e_{\sigma_{\eta}(\sigma_x(y)), \tau_x(\eta)}\otimes
e_{\tau_{\sigma_{x}(y)}(\eta), y} \label{expb1} \nonumber\\
&&  {\cal F}_{0,12} \check r_{12} = \sum_{\eta, x,y\in X}  e_{\sigma_{\eta}(x), \sigma_{\eta}(x)} 
\otimes e_{h, \sigma_{\tau_x(\eta)}(y)} \otimes e_{\sigma_x(y), \tau_y(\tau_x(\eta))}, \label{expb2} \nonumber
\end{eqnarray}
subject to $C_1=0$. We now follow the logic of obtaining the constraints for set-theoretic solutions of the YBE. 
We employ (\ref{cc1}) and (\ref{cc2})
which as argued due to $C_1=0 $ guarantee that $\sigma_{\eta}(x) = \sigma_{\hat \eta}(\hat x)$
leading to $ \check r_{12} {\cal F}_{0,12} = {\cal F}_{0,12} \check r_{12} $, which basically concludes our proof. 
\end{enumerate}
All possible permutations among the indices  $0, 1, 2$ 
can be considered in  a straightforward manner. For instance, by multiplying  
expression $ \check r_{12} {\cal F}_{0,12} = {\cal F}_{0,12}\check r_{12}$ 
with ${\cal P}_{01}$ (left and right) we obtain $  \check r_{02} {\cal F}_{1,02} 
= {\cal F}_{1,02}\check r_{02}$.
\end{proof}

\begin{rem}{\label{rem4}}This is a straightforward, but useful remark. Recall, 
$r = {\cal P} \check r$, where ${\cal P}$ is the permutation operator.
If ${\cal F}_{01,2} \check r_{01} = \check r_{01}{\cal F}_{01,2},$ and 
${\cal F}_{1,02} \check r_{02} = \check r_{02}{\cal F}_{1,02},$ as in Proposition \ref{proco1},
then by multiplying the latter two equalities with ${\cal P}$ from the left we conclude: 
${\cal F}_{10,2}  r_{01} = r_{01}{\cal F}_{01,2},$  and
${\cal F}_{1,20} r_{02} =  r_{02}{\cal F}_{1,02}.$
\end{rem}

\noindent {\bf Generalization.} The $(n+1)$-objects  ${\cal F}_{0,12...n}$
and ${\cal F}_{0,12...n}$ 
can be now derived by iteration exploiting the explicit form of 
$T_{0,12..n} =r_{0n}\ldots r_{01}$. Indeed, let us first identify $T_{0,1..n}$ 
recalling the specific form of the set-theoretic solution 
$r =\sum_{x,y \in X} e_{y, \sigma_x(y)} \otimes e_{x, \tau_y(x)},$
\begin{equation}
T_{0,1..n}= \sum_{x_1,..x_n, y_1 \in X} e_{y_n, \sigma_{x_1}(y_1)} \otimes e_{x_1, 
\tau_{y_1}(x_1)}\otimes \ldots \otimes e_{x_n, \tau_{y_n}(x_n)}, \label{genT}
\end{equation}
subject to $~y_{m-1} = \sigma_{x_m}(y_m),$ $m \in \{2,\ldots, n \}$. 
Then  according to the generic expression (\ref{genT}), (bear  in mind that 
$T_{0,1..n} \check r_{mm+1}= \check r_{m m+1} T_{0,1..n},$ 
$m \in \{1,\ldots, n-1\};$ see also Proposition  \ref{proco1} and expression (\ref{cop1})) we derive,\\ 
${\cal F}_{0,1...n} = \sum_{x_1,...,x_n, y_n \in X} e_{\sigma_{x_1}(y_1), \sigma_{x_1}(y_1)} \otimes e_{x_1, 
\tau_{y_1}(x_1)}\otimes \ldots \otimes e_{x_n, \tau_{y_n}(x_n)},$ subject to $~y_{m-1} 
= \sigma_{x_m}(y_m),$ $m \in \{2,\ldots, n \}$. 
Also, by iteration (see expression (\ref{cop2})) we obtain, 
${\cal F}_{01..n-1,n} =\sum_{x_1,...,x_{n+1} \in X} e_{x_1, x_1} \otimes e_{x_2, x_2} 
\otimes \ldots e_{x_n, x_n} \otimes e_{{\mathrm X}, x_{n+1}},$
where we define ${\mathrm X} := \sigma_{x_1}(\sigma_{x_2}(...(\sigma_{x_n}(x_{n+1}))...) $. 
The general expressions hold subject to the constraint 
$\sigma_{x_n}(\sigma_{x_{n+1}}(y)) =\sigma_{\sigma_{x_n}(x_{n+1})}(\sigma_{\tau_{x_{n+1}}(x_n)}(y)) ,$ $\ x_n, y \in X$.

We may now proceed in proving the admissibility of the derived twist, i.e. show
the validity of the co-cycle condition.

\begin{pro}{\label{cocycle}}
 Let ${\cal F}_{01} ={\cal F}  \otimes I$ and ${\cal F}_{12} =I \otimes {\cal F}$, where ${\cal F} =
\sum_{\eta, x,y \in X} e_{\eta, \eta} \otimes e_{\sigma_{\eta}(x), x}$.
Let also  ${\cal F}_{01,2}$ and ${\cal F}_{0,12}$ defined in (\ref{cop1}) and (\ref{cop2}). Then 
\begin{equation}
{\cal F}_{012}:={\cal F}_{01} {\cal F}_{01,2} = {\cal F}_{12} {\cal F}_{0,12}.
\end{equation}
\end{pro}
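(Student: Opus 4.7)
The plan is to verify the co-cycle equality by direct computation of matrix products on both sides, then to match the resulting tensor sums term by term using the involutive property of $\check r$ together with the constraint $C_1=0$ that is already built into the definitions of ${\cal F}_{01,2}$ and ${\cal F}_{0,12}$.

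First I would substitute the explicit expressions. For the left hand side, using ${\cal F}_{01}=\sum_{a,b\in X} e_{a,a}\otimes e_{\sigma_a(b),b}\otimes I$ and the formula (\ref{cop2}) for ${\cal F}_{01,2}$, the multiplication on slot $0$ forces $a=\sigma_\eta(x)$, and the multiplication on slot $1$ forces $b=\tau_x(\eta)$. This gives
\begin{equation*}
{\cal F}_{01}{\cal F}_{01,2}=\sum_{x,y,\eta\in X} e_{\sigma_\eta(x),\sigma_\eta(x)}\otimes e_{\sigma_{\sigma_\eta(x)}(\tau_x(\eta)),\tau_x(\eta)}\otimes e_{\sigma_\eta(\sigma_x(y)),y}.
\end{equation*}
For the right hand side, using ${\cal F}_{12}=\sum_{p,q\in X} I\otimes e_{p,p}\otimes e_{\sigma_p(q),q}$ and the formula (\ref{cop1}) for ${\cal F}_{0,12}$, the multiplication on slot $1$ forces $p=\eta$, and the multiplication on slot $2$ forces $q=\sigma_x(y)$, giving
\begin{equation*}
{\cal F}_{12}{\cal F}_{0,12}=\sum_{x,y,\eta\in X} e_{\sigma_\eta(x),\sigma_\eta(x)}\otimes e_{\eta,\tau_x(\eta)}\otimes e_{\sigma_\eta(\sigma_x(y)),y}.
\end{equation*}

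The two expressions agree in slots $0$ and $2$, so the only step that requires an argument is the identity of the middle slots, namely $\sigma_{\sigma_\eta(x)}(\tau_x(\eta))=\eta$. This is precisely the first half of the involutive property of $\check r$: writing $x'=\sigma_\eta(x)$, $y'=\tau_x(\eta)$, involutivity $\check r\check r=\mathrm{id}$ applied to $(\eta,x)$ yields $\sigma_{x'}(y')=\eta$, which is exactly what is needed. Hence the middle factor collapses to $e_{\eta,\tau_x(\eta)}$ and the two sides coincide.

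The main (and essentially only) obstacle is bookkeeping — making sure that the index substitutions coming from the $\delta$'s produced by the products $e_{a,a}e_{p,p}$ and $e_{\sigma_a(b),b}e_{r,r}$ are consistent with the constraint $C_1=0$ under which ${\cal F}_{01,2}$ and ${\cal F}_{0,12}$ are defined. Since $C_1=0$ is a condition on the triple $(\eta,x,y)$ that survives the substitutions $a=\sigma_\eta(x)$, $b=\tau_x(\eta)$, $p=\eta$, $q=\sigma_x(y)$, no compatibility issue arises, and the co-cycle identity ${\cal F}_{012}={\cal F}_{01}{\cal F}_{01,2}={\cal F}_{12}{\cal F}_{0,12}$ follows. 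This establishes admissibility of the twist ${\cal F}$ in the sense of Definition \ref{def2}.
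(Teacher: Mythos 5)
Your proposal is correct and follows essentially the same route as the paper: direct substitution of the explicit forms of ${\cal F}_{01}$, ${\cal F}_{12}$, ${\cal F}_{01,2}$, ${\cal F}_{0,12}$ and term-by-term matching of the resulting sums. The only difference is that you make explicit the involutivity identity $\sigma_{\sigma_\eta(x)}(\tau_x(\eta))=\eta$ needed to collapse the middle tensor factor on the left-hand side, a step the paper's computation leaves implicit; this is a correct and welcome clarification.
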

\begin{proof}
By substituting the expressions for ${\cal F}_{01},\ {\cal F}_{12}$, ${\cal F}_{01,2}$ and ${\cal F}_{0,12}$ 
(recall $C_1=0$ holds for ${\cal F}_{01,2},$ ${\cal F}_{0,12}$) we obtain by direct computation:
\begin{eqnarray}
&&{\cal F}_{01} {\cal F}_{01,2} = 
  \sum_{\eta, x, y \in X}  e_{\sigma_{\eta}(x),\sigma_{\eta}(x)} \otimes e_{\eta, \tau_{x}(\eta)} \otimes 
e_{\sigma_{\eta}(\sigma_x(y)), y}. \nonumber
\end{eqnarray}
Similarly,
\begin{eqnarray}
&& {\cal F}_{12} {\cal F}_{0,12} = 
 \sum_{\eta,x,y \in X}  e_{\sigma_{\eta}(x), \sigma_{\eta}(x)} \otimes e_{\eta, \tau_x(\eta)} 
\otimes  e_{\sigma_{\eta}(\sigma_{x}(y)), y} \nonumber
\end{eqnarray} 
The explicit form the 3-twist is given from the expressions above as\\
${\cal F}_{012}= \sum_{\eta,x,y \in X}  
e_{\sigma_{\eta}(x), \sigma_{\eta}(x)} \otimes e_{\eta, \tau_x(\eta)} \otimes  
e_{\sigma_{\eta}(\sigma_{x}(y)),y}\vert_{C_1=0}.$
\end{proof}

We are now in the position to show the factorization of the monodromy matrix 
$T_{0,12}$ in terms of the admissible twists.

\begin{cor}{\label{twistglobal}} 
Let $r =\sum_{x,y \in X} e_{y, \sigma_x(y)} \otimes e_{x, \tau_y(x)}$, a solution of the YBE,  and 
${\cal F}_{012} =   \sum_{x, y, \eta \in X} e_{\eta, \eta} \otimes e_{\sigma_{\eta}(x), x}\otimes 
e_{\sigma_{\eta}(\sigma_{x}(y)), y}\vert_{C_1=0},$ as defined in Proposition \ref{cocycle}.
Let also, $T_{0,12} = r_{02} r_{01}$, then $T_{0,12} = {\cal F}^{-1}_{120}\ {\cal F}_{012}$ .
\end{cor}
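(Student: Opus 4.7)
My strategy is to first simplify ${\cal F}^{-1}_{120}{\cal F}_{012}$ using the admissibility (co-cycle) established in Proposition \ref{cocycle}, and then to verify the remaining equality by a direct matrix computation. Applying Proposition \ref{cocycle} as stated gives ${\cal F}_{012}={\cal F}_{12}\,{\cal F}_{0,12}$. Applying the same proposition under the cyclic relabelling $0\to 1,\ 1\to 2,\ 2\to 0$ gives ${\cal F}_{120}={\cal F}_{12}\,{\cal F}_{12,0}$, where ${\cal F}_{12,0}$ is obtained from (\ref{cop2}) by the corresponding permutation of tensor slots. Cancelling ${\cal F}_{12}$ then yields
$$ {\cal F}^{-1}_{120}\,{\cal F}_{012}={\cal F}_{12,0}^{-1}\,{\cal F}_{0,12}, $$
so it suffices to identify the right-hand side with $T_{0,12}=r_{02}r_{01}$.

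To verify this I would substitute the explicit expressions (\ref{cop1}) for ${\cal F}_{0,12}$ and the relabelled (\ref{cop2}) for ${\cal F}_{12,0}$, and compute the product of elementary matrices. The three resulting delta-constraints---namely $\sigma_\eta(\sigma_x(y))=\sigma_{\eta'}(x')$, $\sigma_\eta(x)=\eta'$, and $\tau_x(\eta)=\sigma_{x'}(y')$---determine, via involutivity of $\check r$ (which provides identities such as $\sigma_{\sigma_\eta(x)}(\tau_x(\eta))=\eta$) together with the brace constraint $C_1=0$, the explicit values $\eta'=\sigma_\eta(x)$, $x'=\sigma_{\tau_x(\eta)}(y)$, $y'=\tau_y(\tau_x(\eta))$.

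Finally, comparing with the $n=2$ instance of the explicit formula (\ref{genT}) for $T_{0,12}$, the change of summation variables $\eta=\sigma_{x_1}(x_2)$, $x=\tau_{x_2}(x_1)$, $y=y_2$---bijective by involutivity of $\check r$---matches the three tensor factors term by term. The first and third factors follow from involutivity alone, and the middle factor reduces directly, under the substitutions $\eta'=x_1$ and $x'=\sigma_{x_2}(y_2)$, to $e_{x_1,\tau_{\sigma_{x_2}(y_2)}(x_1)}$, which is exactly the middle factor of $T_{0,12}$. The principal technical step is the first factor's column index: one must identify $\sigma_{\sigma_\eta(x)}(\sigma_{\tau_x(\eta)}(y))$ with $\sigma_\eta(\sigma_x(y))=\sigma_{x_1}(\sigma_{x_2}(y_2))$, which is precisely the brace relation $C_1=0$ guaranteed by Rump's theorem. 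Alternatively, one may view the whole statement as the $N=2$ specialization of Proposition \ref{Drinfeld3} applied with the trivial underlying ${\cal R}=I$ and twist ${\mathfrak F}={\cal F}^{-1}$, since then $\tilde{\cal R}={\mathfrak F}^{op}\,I\,{\mathfrak F}^{-1}={\cal F}_{21}^{-1}{\cal F}_{12}=r$, ${\mathfrak T}_{0,12}=I$ and $\tilde{\mathfrak T}_{0,12}=T_{0,12}$; the direct approach above avoids having to separately verify the co-cycle condition for ${\cal F}^{-1}$.
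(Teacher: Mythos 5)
Your proof is correct, and in substance it follows the direct-computation verification that the paper itself carries out immediately after the corollary; the paper's formal proof, by contrast, is the abstract one that chains Proposition \ref{Drinfeld3} with the intertwining relations of Proposition \ref{proco1} (via Remark \ref{rem4}) and the co-cycle condition of Proposition \ref{cocycle}. What you do differently is the preliminary cancellation: writing ${\cal F}_{012}={\cal F}_{12}{\cal F}_{0,12}$ and ${\cal F}_{120}={\cal F}_{12}{\cal F}_{12,0}$ (the relabelled co-cycle condition, legitimate because relabelling tensor slots is conjugation by permutation operators, exactly as the paper uses ${\cal F}_{n0,m}={\cal P}_{0n}{\cal F}_{0n,m}{\cal P}_{0n}$) reduces the claim to ${\cal F}_{12,0}^{-1}\,{\cal F}_{0,12}=r_{02}r_{01}$, which trims the delta-constraint bookkeeping relative to the paper's computation of ${\cal F}_{120}^{-1}{\cal F}_{012}$ in full. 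Your resolution of the three constraints is sound and invokes $C_1=0$ at the same single point as the paper does (to identify $x'=\sigma_{\tau_x(\eta)}(y)$, equivalently to rewrite the column index $\sigma_{\sigma_\eta(x)}(\sigma_{\tau_x(\eta)}(y))$ as $\sigma_\eta(\sigma_x(y))$), with the remaining identifications following from involutivity and the bijectivity of the change of variables $(\eta,x)\mapsto(\sigma_\eta(x),\tau_x(\eta))$, which holds since $\check r^2=\mbox{id}$. You are also right to treat the ``${\cal R}=I$, ${\mathfrak F}={\cal F}^{-1}$'' shortcut with caution: the $3$-twist built from ${\cal F}^{-1}$ is $({\cal F}_{01,2}{\cal F}_{01})^{-1}$ rather than ${\cal F}_{012}^{-1}$, so that route would indeed require a separate co-cycle verification, as you note.
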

\begin{proof} This is based on Proposition \ref{Drinfeld3}.
Indeed, to show the decomposition of the monodromy matrix $T_{0,12}$ constructed from set-theoretic solutions
we employ the following statements: 
\begin{enumerate}
\item  Proposition \ref{Drinfeld3}: ${\cal F}_{n0}^{-1}\  {\cal F}_{0n}= r_{0n}$, $n \in \{1,\ 2\}$.
\item  Proposition \ref{proco1} and Remark \ref{rem4}: ${\cal F}_{n0,m} r_{0n} = r_{0n} {\cal F}_{0n,m}$ and \\
${\cal F}_{m, n0} r_{0n} = r_{0n} {\cal F}_{m,0n},$  $n\neq m \in \{ 1,\ 2\}$.
\item Proposition \ref{cocycle}: the co-cycle condition,  ${\cal F}_{01} {\cal F}_{01,2} = {\cal F}_{12} {\cal F}_{0,12}$.
\end{enumerate}
We first recall that ${\cal F}_{n0,m} ={\cal P}_{0n} {\cal F}_{0n,m} {\cal P}_{0n},$ 
similarly for ${\cal F}_{m,n0}$, where ${\cal P}$ is the permutation operator.
The proof is straightforward based on the proof of Proposition \ref{Drinfeld3}: i.e. use (1)-(3) above, then 
${\cal F}_{120}\ T_{0,12}\ {\cal F}_{012}^{-1}=I_{V^{\otimes 3}} $.
\end{proof}
 
\noindent {\bf Direct computation.} Let us also confirm Corollary \ref{twistglobal} by direct computation. 
We first derive $T_{0,12}$ via (\ref{genT}) for $n=2$:
\begin{eqnarray}
T_{0,12} =\sum_{\eta,  x, y \in X } e_{y, \sigma_{\eta}(\sigma_x(y))} 
\otimes e_{\eta, \tau_{\sigma_x(y)}(\eta)} \otimes e_{x, \tau_{y}(x)}. \label{TT}
\end{eqnarray}

We also derive ${\cal F}_{012}^{-1}$; due to the fact that $\sigma_y,\  \tau_x$ 
are bijective functions: ${\cal F}_{012}^{-1} = {\cal F}_{012}^{T}$, 
where $^T$ denotes total transposition, i.e. transposition in all three spaces of the tensor product. We may now show
by direct computation that $T_{0,12} = {\cal F}_{120}^{-1}\ {\cal F}_{012}$, indeed
\begin{eqnarray} 
&&{\cal F}_{120}^{-1}\ {\cal F}_{012}= \nonumber\\ 
&&\Big (\sum_{\eta, x, y\in X} e_{y, \sigma_{\eta}(\sigma_x(y))} \otimes e_{\eta, \eta} \otimes e_{x, \sigma_{\eta}(x)}\Big ) 
\Big ( \sum_{\hat \eta, \hat x, \hat y\in X}   e_{\hat \eta, \hat \eta} \otimes e_{\sigma_{\hat \eta}(\hat x), \hat x} 
\otimes e_{\sigma_{\hat \eta}(\sigma_{\hat x}(\hat y)), \hat y}\Big ) = \nonumber\\
&& \sum_{x,y, \eta, \hat x, \hat y,\hat \eta\in X} e_{y, \hat \eta} \otimes e_{\eta, \hat x} \otimes e_{x, \hat y} \label{Fexp1}
\end{eqnarray}
subject to the following conditions:
\begin{equation}
\hat \eta = \sigma_{\eta}(\sigma_x(y)), \quad  \sigma_{\hat \eta}(\hat x) = \eta , \quad 
\sigma_{\eta}(x) = \sigma_{\hat \eta}(\sigma_{\hat x}(\hat y)) \label{conditions1}
\end{equation}
Our aim now is to express the $\hat x, \hat y,\hat \eta$ in terms of $x,y, \eta$,
notice that $\hat \eta$ is already expressed in such a way. Consider now the condition 
$\sigma_{\hat \eta}(\hat x) = h$, and let $\tau_{\hat x}(\hat \eta) = \xi$ then we obtain via the involutive property:
$\hat \eta = \sigma_{\eta}(\xi)$ and $\hat x = \tau_{\xi}(\eta)$. However, due to the first
of the conditions (\ref{conditions1}) and the fact that $\sigma_{\eta}$ is a bijective function, we conclude that 
$\xi = \sigma_{x}(y)$ and hence $\hat x = \tau_{\sigma_x(y)}(\eta) $.
 It remains now to express $\hat y$ in terms of $x, y, \eta$; we consider the third of the conditions 
(\ref{conditions1}) as well as condition $C_1 =0$ then
$\sigma_{\eta}(x) = \sigma_{\sigma_{\hat \eta} (\hat x)}(\sigma_{\tau_{\hat x}(\hat \eta)}(\hat y))$, 
but as shown above $\sigma_{\hat \eta}(\hat x) = h$,
then using also the fact that $\sigma_{\eta}$ is a bijection we conclude
 $x =\sigma_{\tau_{\hat x}(\hat \eta)}(\hat y) $. From our considerations above 
$\tau_{\hat x}(\hat \eta)(=\xi)= \sigma_{x}(y),$ 
and we conclude that $\hat y = \tau_y(x)$. 

Having expressed  $\hat x, \hat y,\hat \eta$ in terms of $x,y, \eta$: $\hat \eta = \sigma_{\eta}(\sigma_x(y)),$ 
$\hat x = \tau_{\sigma_x(y)}(\eta),$  and $\hat y = \tau_y(x),$ we arrive via (\ref{Fexp1}) at
\begin{equation}
{\cal F}_{120}^{-1}\ {\cal F}_{012}=  \sum_{x,y, \eta \in X} e_{y, \sigma_{\eta}(\sigma_x(y))} 
\otimes e_{\eta, \tau_{\sigma_x(y)}(\eta)} \otimes e_{x,\tau_y(x) }  \label{exp2}
\end{equation}
which is precisely $T_{0,12}$ (\ref{TT}). $\hspace{3.1in} \square$

\begin{rem} An alternative admissible twist is derived as follows.
Using  the notation introduced in (\ref{codef}), we define
\begin{eqnarray}
&& {\cal G}_{0,12} = \sum_{\eta, x, y \in X}e_{\tau_y(\tau_x(\eta)), \eta} 
\otimes e_{\sigma_{x}(y), \sigma_{x}(y)} \otimes e_{\tau_y(x), \tau_y(x)} \vert_{C_2=0}\\
&& {\cal G}_{01,2} =  \sum_{\eta, x, y \in X} e_{\tau_x(\eta), \eta} \otimes
 e_{y, \sigma_x(y)} \otimes e_{\tau_y(x), \tau_y(x)}  \vert_{C_2=0}
\end{eqnarray}
$C_2 = \tau_{y}(\tau_{x}(\eta)) - \tau_{\tau_y(x)} (\tau_{\sigma_{x}(y)}(\eta))$, then:

\begin{enumerate}
\item ${\cal G}_{01,2} \check r_{01} =  \check r_{01}{\cal G}_{01,2} $ 
and ${\cal G}_{0,12} \check r_{12} =  \check r_{12}{\cal G}_{0,12} $, 
where $\check r$ is the brace solution.

\item The co-cycle condition is also satisfied and the 3-twist is then derived, i.e.
${\cal G}_{012} :={\cal G}_{01} {\cal G}_{01,2}= {\cal G}_{12} {\cal G}_{0,12}$, 
where the 2-twist is ${\cal G} = \sum_{x,y\in X}e_{\tau_{y}(x), x} \otimes e_{y,y},$ given in Remark \ref{rg}.  
The explicit form of the 3-twist then is\\ ${\cal G}_{012} = 
\sum_{\eta, x, y \in X} e_{\tau_{y}(\tau_x(\eta)), \eta} \otimes e_{y, \sigma_{x}(y)} 
\otimes e_{\tau_{y}(x), \tau_{y}(x)}\vert_{C_2=0}$
\end{enumerate}

The proofs of the statements above follow the same logic of 
the corresponding proofs for  ${\cal F}_{012}$.
\end{rem}

\noindent {\bf The $(n+1)$-twist}. The general $(n+1)$-twists may be derived by iteration using the algebra  co-product rules 
as identified by $T_{0,12....n}= r_{0n} ... r_{02} r_{01}$. For instance, recall the  generalized expressions 
${\cal F}_{0,1...n} = \sum_{x_1,...,x_n, y_n \in X} e_{\sigma_{x_1}(y_1), \sigma_{x_1}(y_1)} \otimes e_{x_1, 
\tau_{y_1}(x_1)}\otimes \ldots \otimes e_{x_n, \tau_{y_n}(x_n)},$ subject to $~y_{m-1} 
= \sigma_{x_m}(y_m),$ $m \in \{2,\ldots, n \},$ and
${\cal F}_{01..n-1,n} =\sum_{x_1,...,x_{n+1} \in X} e_{x_1, x_1} \otimes e_{x_2, x_2} 
\otimes \ldots e_{x_n, x_n} \otimes e_{{\mathrm X}, x_{n+1}},$
where ${\mathrm X} := \sigma_{x_1}(\sigma_{x_2}(...(\sigma_{x_n}(x_{n+1}))...).$ 
We extract the $(n+1)$-twist via the generalized co-cycle condition (\ref{gencoc})  by iteration:
${\cal F}_{012...n}= \sum_{x_1, ...,x_{n+1} \in X} e_{x_1, x_1} \otimes e_{\sigma_{x_1}(x_2), x_2} 
\otimes e_{\sigma_{x_1}(\sigma_{x_2}(x_3)), x_3} \otimes
 \ldots \otimes e_{{\mathrm X}, x_{n+1}}$, subject to the standard constraint 
$\sigma_{x_n}(\sigma_{x_{n+1}}(y)) =\sigma_{\sigma_{x_n}(x_{n+1})}(\sigma_{\tau_{x_{n+1}}(x_n)}(y)) ,$ $\ x_n,  y \in X$.

\begin{lemma}{} Let ${\cal F}$ be an admissible twist, such that 
$ r_{12} = {\cal F}^{-1}_{21} {\cal F}_{12},$ satisfies the YBE. Let also the Baxterized solutions of the YBE
$R(\lambda) = \lambda r +{\cal P},$ $\tilde R(\lambda) = \lambda {\mathbb I} +{\cal P}$ 
and ${\mathrm T}_0(\lambda)= R_{0N}(\lambda) \ldots R_{01}(\lambda),$  
$~\tilde {\mathrm T}_0(\lambda)= \tilde R_{0N}(\lambda) \ldots \tilde R_{01}(\lambda),$ then
\begin{equation}
 \tilde {\mathrm T}_0(\lambda)= {\cal F}_{12..n0}\ {\mathrm T}_0(\lambda)\  {\cal F}^{-1}_{012..n}. \label{concl}
\end{equation}
\begin{proof}
It suffices to prove that  
\begin{equation}
{\cal F}_{12...n-1 n 0 n+1...N}\ R_{0n}(\lambda) =  \tilde R_{0n}(\lambda)\ 
{\cal F}_{12...n-1 0n n+1...N}  \label{state1}
\end{equation}
then according to Propositions \ref{lemma1} and \ref{Drinfeld3} expression (\ref{concl}) follows. 
Indeed, for the admissible twist via Proposition \ref{lemma1}, $ {\cal F}_{12...n-1 n 0 n+1...N}\ 
r_{0n} =  {\cal F}_{12...n-1 0n n+1...N} ,$ 
but also by the definition of the permutation operator ${\cal P}$ we  have \\
${\cal F}_{12...n-1 n 0 n+1...N}\ {\cal P}_{0n}=  {\cal P}_{0n}\ {\cal F}_{12...n-1 0n n+1...N}.$ 
Then according to the definition of $R,\ \tilde R$ we arrive at (\ref{state1}), which concludes our proof.
\end{proof}
\end{lemma}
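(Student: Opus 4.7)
The plan is to reduce the global identity (\ref{concl}) to a single local intertwining relation for each factor of the transfer matrix, and then to telescope the factors. Writing $R_{0n}(\lambda) = \lambda r_{0n} + {\cal P}_{0n}$ and $\tilde R_{0n}(\lambda) = \lambda {\mathbb I} + {\cal P}_{0n}$, the target local identity is
\begin{equation}
{\cal F}_{12\ldots n-1 n 0 n+1\ldots N}\, R_{0n}(\lambda) = \tilde R_{0n}(\lambda)\, {\cal F}_{12\ldots n-1 0n n+1\ldots N}. \nonumber
\end{equation}
Comparing the coefficients of $\lambda$ and of $\lambda^{0}$ on both sides, this splits into two independent statements: the $\lambda$-term requires ${\cal F}_{\ldots n0\ldots}\, r_{0n} = {\cal F}_{\ldots 0n\ldots}$, while the constant term requires ${\cal F}_{\ldots n0\ldots}\, {\cal P}_{0n} = {\cal P}_{0n}\, {\cal F}_{\ldots 0n\ldots}$.

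The first of these is exactly the content of Proposition \ref{lemma1}. Indeed, the hypothesis $r_{12} = {\cal F}^{-1}_{21}{\cal F}_{12}$ rewritten as $r_{0n} = {\cal F}^{-1}_{n0}{\cal F}_{0n}$ can be read as Drinfeld's twisting identity $\tilde{\cal R}_{0n} = {\cal F}_{n0}\,{\cal R}_{0n}\,{\cal F}^{-1}_{0n}$ in the trivial case ${\cal R}_{0n} = r_{0n}$, $\tilde{\cal R}_{0n} = {\mathbb I}$. Both $r$ and ${\mathbb I}$ are solutions of the YBE, and the admissibility of ${\cal F}$ has already been secured by Proposition \ref{cocycle} together with the iterated $(n+1)$-twist formulas leading to (\ref{gencoc}); hence Proposition \ref{lemma1} applies and delivers precisely ${\cal F}_{\ldots n0\ldots} r_{0n} = {\cal F}_{\ldots 0n\ldots}$. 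The second, constant-term identity is immediate: conjugation by ${\cal P}_{0n}$ interchanges the labels $0$ and $n$, which is exactly what turns the index pattern $\ldots n 0\ldots$ into $\ldots 0 n\ldots$ inside ${\cal F}$.

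With the local intertwining relation in hand, the global statement (\ref{concl}) follows by the same telescoping argument used in the proof of Proposition \ref{Drinfeld3}. One starts from ${\cal F}_{12\ldots N 0}\,{\mathrm T}_0(\lambda) = {\cal F}_{12\ldots N 0}\, R_{0N}(\lambda) R_{0,N-1}(\lambda)\cdots R_{01}(\lambda)$ and pushes the twist through the product factor by factor; each passage through $R_{0n}(\lambda)$ converts it into $\tilde R_{0n}(\lambda)$ and rearranges the index subscript $n0$ into $0n$ inside ${\cal F}$, so that after $N$ steps the twist has migrated to the right as ${\cal F}_{012\ldots N}$ and all the $R$'s have become $\tilde R$'s. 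Right-multiplication by ${\cal F}^{-1}_{012\ldots N}$ then yields (\ref{concl}). I expect the only point requiring care to be the bookkeeping of index reshufflings of ${\cal F}$ through successive steps, but this is handled automatically once the local intertwining relation is established, so the argument reduces to a clean concatenation of results already proved in the paper.
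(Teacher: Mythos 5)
Your proposal is correct and follows essentially the same route as the paper: you reduce (\ref{concl}) to the local intertwining relation (\ref{state1}), prove it by treating the $\lambda$-coefficient via Proposition \ref{lemma1} applied with ${\cal R}=r$, $\tilde{\cal R}={\mathbb I}$ and the constant term via the index-swapping property of ${\cal P}_{0n}$, and then telescope exactly as in Proposition \ref{Drinfeld3}. The only cosmetic difference is that you phrase the local step as an explicit comparison of coefficients of $\lambda$ and $\lambda^{0}$, which the paper leaves implicit in the phrase ``according to the definition of $R,\ \tilde R$.''
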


\subsection{Special case: Lyubashenko's solution}

\noindent We focus now on a special class of set-theoretic solutions of the YBE known as Lyubashenko's solutions
\cite{Drin} (see also \cite{gateva} in relation to symmetric groups).
We first introduce this class of solutions and  we show that they
can be expressed as simple twists  recalling the results of \cite{DoiSmo2}.
We then move on to show that these are admissible twists and we explicitly derive the associated $n$-twists.

Let us recall  Proposition 3.1  in \cite{DoiSmo2}:
\begin{pro}{\label{prop1L}} Let $\tau,\ \sigma: X \to X$ be isomorphisms, 
such that $\sigma(\tau(x)) = \tau(\sigma(x)) = x$ and let
${\cal V}=\sum_{x \in X} e_{x, \tau(x)}$ and ${\cal V}^{-1} = \sum_{x \in X}  e_{ \tau(x),x}$.
Then any solution of the braid YBE of the type 
\begin{equation}
\check r=\sum_{x, y \in X} e_{x, \sigma(y)} \otimes e_{y, \tau(x)}, \label{special1}
\end{equation}
can be obtained from the permutation operator ${\cal P}= \sum_{x, y \in X} e_{x,y} \otimes e_{y,x}$ as
\begin{equation}
\check r = ({\cal V}\otimes I ){\cal P} ({\cal V}^{-1} \otimes I)= ( I \otimes {\cal V}^{-1} ) {\cal P} (I \otimes  {\cal V} ) \label{special1b}
\end{equation}
\end{pro}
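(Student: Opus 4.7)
The statement is essentially a direct matrix computation, so the plan is to verify the two equalities in (\ref{special1b}) by expanding the product of elementary matrix units and then relabelling indices using the fact that $\tau$ and $\sigma$ are mutually inverse bijections on $X$.

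First I would establish that ${\cal V}$ and ${\cal V}^{-1}$ as defined are genuine inverses: using the multiplication rule $e_{a,b} e_{c,d} = \delta_{b,c} e_{a,d}$ together with the bijectivity of $\tau$, a one line calculation gives ${\cal V}{\cal V}^{-1} = {\cal V}^{-1}{\cal V} = I$. This also justifies the notation ${\cal V}^{-1}$.

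Next I would compute the left-hand expression in (\ref{special1b}) step by step. Multiplying $({\cal V}\otimes I){\cal P}$ first, the constraint $\delta_{\tau(x),y}$ collapses one sum and yields $\sum_{x,z} e_{x,z}\otimes e_{z,\tau(x)}$. Multiplying on the right by $({\cal V}^{-1}\otimes I)$ introduces a factor $\delta_{z,\tau(w)}$, giving $\sum_{x,w} e_{x,w}\otimes e_{\tau(w),\tau(x)}$. I would then substitute $w = \sigma(y)$; since $\sigma$ is a bijection and $\tau\sigma = \mathrm{id}_X$, this reindexing is valid and produces $\tau(w) = y$, so the expression becomes $\sum_{x,y} e_{x,\sigma(y)}\otimes e_{y,\tau(x)}$, which is exactly $\check r$ in (\ref{special1}).

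For the second equality $(I\otimes {\cal V}^{-1}){\cal P}(I\otimes {\cal V})$, the same mechanism applies with the roles of $\tau$ and $\sigma$ (and of the two tensor factors) interchanged. Expanding the triple product, the two Kronecker deltas from the matrix unit multiplications pin two summation indices, and after the renaming $x\mapsto\tau(x)$ (permissible because $\tau$ is a bijection) the expression collapses to $\sum_{x,y}e_{x,\sigma(y)}\otimes e_{y,\tau(x)}$ as well. I expect no genuine obstacle: the only subtle point is keeping track of which variable is relabelled by $\sigma$ versus $\tau$, and the hypothesis $\sigma\tau = \tau\sigma = \mathrm{id}_X$ is used precisely to pass between the summation indices in one form and the other.
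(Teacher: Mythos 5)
Your computation is correct and is exactly the standard verification; the paper itself gives no proof here (it simply recalls Proposition 3.1 of \cite{DoiSmo2}), and the direct expansion of matrix units with the relabelling $y\mapsto\sigma(y)$ is precisely what that verification amounts to. One small slip: in the second equality the product also reduces to $\sum_{x,y}e_{x,y}\otimes e_{\tau(y),\tau(x)}$, so the required relabelling is again $y\mapsto\sigma(y)$ (not $x\mapsto\tau(x)$ as you wrote), which immediately gives $\check r$.
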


We will now derive the $n$-twists associated to Lyubashenko's solution and 
show that these are admissible.
We introduce in what follows two distinct twists ${\cal F}$ and ${\cal G}$ 
compatible with the results of \cite{DoiSmo2}.
Let us first introduce the twist 
${\cal F}_{12}={\cal V}_2,$
given the explicit form ${\cal V} = \sum_{x \in X} e_{x, \tau(x)}$ we may rewrite
 ${\cal F} = \sum_{x, y \in X}e_{x, x} \otimes e_{y, \tau(y)}$.
It is instructive to compare with the general set-theoretic twist,  Proposition \ref{twistlocal}. 
Note that the terms at each space now ``decouple''  due to the fact that $\sigma_{x}(y),\ \tau_{y}(x) \to \sigma(y),\ \tau(x)$.

Indeed, according to Proposition \ref{prop1L},  ${\cal P} = {\cal V}_2 \check r_{12} {\cal V}_2^{-1}$, 
where $\check r = \sum_{x,y\in X} e_{x,\sigma(y)} \otimes e_{y,\tau(x)}$. Also,
${\cal F}$ is a {\it Reshetikhin type} twist as it trivially satisfies the YBE,
and consequently the co-product structure is the one of the quasi-triangular Hopf algebra:
\begin{equation}
{\cal F}_{1,23} = {\cal F}_{13} {\cal F}_{12}, \quad {\cal F}_{12,3}= {\cal F}_{13} {\cal F}_{23}. \label{cop1b}
\end{equation}
By means of (\ref{cop1b}) and given that ${\cal F}_{12}={\cal V}_2$:
we have: 
\begin{equation}
{\cal F}_{1,23} = {\cal V}_2 {\cal V}_3, \quad {\cal F}_{12,3} = {\cal V}_3^2. \label{cop1bc}
\end{equation} 
The co-cycle condition is satisfied and it is nothing but the Yang-Baxter equation: 
${\cal F}_{123}:= {\cal F}_{12} {\cal F}_{12,3}$.
We also derive by iteration:
\begin{equation}
{\cal F}_{0, 12...n} = {\cal F}_{0n} \ldots {\cal F}_{01}, \quad {\cal F}_{12.n,0} =
{\cal F}_{10}{\cal F}_{20} \ldots  {\cal F}_{n0}, \label{cop2b}
\end{equation}
and the explicit expressions are given as
\begin{equation}
{\cal F}_{0,12..n} = {\cal V} _1 {\cal V}_2 \ldots {\cal V}_n, \quad {\cal F}_{12..n,0} = 
{\cal V}_0^{n}. \label{fcop}
\end{equation}

The precise form of the $(n+1)$-twist is derived in the following Lemma. 
\begin{lemma}{\label{lemmab1}}
Let ${\cal F}_{12} = {\cal V}_2$, be the admissible twist for Lyubashenko's solution, then
 the $(n+1)$-twist is given as $\ {\cal F}_{012..n} =\prod_{k=0}^n {\cal V}_k^k.$
\end{lemma}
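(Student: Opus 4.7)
The plan is to prove the formula by induction on $n$, using the generalized co-cycle relation (\ref{gencoc}) together with the explicit expressions for the coproducts of $\mathcal{F}$ already recorded in (\ref{fcop}) for the Lyubashenko case.

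First, I would observe that because each $\mathcal{V}_k$ acts non-trivially only on the $k$-th tensor factor, the operators $\mathcal{V}_k^{a}$ and $\mathcal{V}_j^{b}$ commute whenever $k \neq j$. Hence the product $\prod_{k=0}^{n}\mathcal{V}_k^{k}$ is unambiguously defined and behaves well under insertions, which removes any ordering subtlety from the argument.

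For the base step $n=1$ we have $\mathcal{F}_{01}=\mathcal{V}_1 = \mathcal{V}_0^{0}\mathcal{V}_1^{1}$, matching the claim. For the inductive step I would assume $\mathcal{F}_{01\ldots n-1} = \prod_{k=0}^{n-1}\mathcal{V}_k^{k}$ and then apply the recursive identity from (\ref{gencoc}):
\begin{equation}
\mathcal{F}_{012\ldots n} \;=\; \mathcal{F}_{012\ldots n-1}\; \mathcal{F}_{012\ldots n-1,\,n}.\nonumber
\end{equation}
The second factor is computed directly from the iterated coproduct formula (\ref{fcop}): relabelling the indices in $\mathcal{F}_{12\ldots n,0}=\mathcal{V}_0^{n}$ (which lives on $n$ ``left'' sites and one ``right'' site) so that the single right-hand slot is $n$ and the $n$ left-hand slots are $0,1,\ldots,n-1$, one obtains $\mathcal{F}_{012\ldots n-1,\,n}=\mathcal{V}_n^{n}$. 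Substituting these two expressions then gives $\mathcal{F}_{012\ldots n} = \prod_{k=0}^{n-1}\mathcal{V}_k^{k}\cdot \mathcal{V}_n^{n}=\prod_{k=0}^{n}\mathcal{V}_k^{k}$, closing the induction.

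The only point requiring any care is the derivation of the formula $\mathcal{F}_{012\ldots n-1,n}=\mathcal{V}_n^{n}$ by relabelling, since (\ref{fcop}) is stated with the singled-out slot labelled $0$ on the left. If this relabelling is deemed non-obvious I would give a short independent sub-induction using (\ref{cop1b}): starting from $\mathcal{F}_{12,3}=\mathcal{V}_3^{2}$ and iteratively applying $\mathcal{F}_{12\ldots m,m+1}=\mathcal{F}_{1,m+1}\cdots \mathcal{F}_{m,m+1}=\mathcal{V}_{m+1}^{m}$, the claim follows since every $\mathcal{F}_{i,m+1}$ equals $\mathcal{V}_{m+1}$. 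I do not anticipate any genuine obstacle; the essential structural input is the Reshetikhin-type factorisation (\ref{cop1b}) of the coproduct, which makes the iteration purely combinatorial.
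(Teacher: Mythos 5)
Your proof is correct and follows essentially the same route as the paper: the paper's (one-line) proof invokes exactly the explicit coproduct expressions (\ref{fcop}) and the generalized co-cycle recursion (\ref{gencoc}), which is what your induction makes explicit, with the key computation ${\cal F}_{01\ldots n-1,n}={\cal F}_{0n}{\cal F}_{1n}\cdots{\cal F}_{n-1,n}={\cal V}_n^{n}$ coming from the Reshetikhin-type factorisation (\ref{cop1b}) just as intended. No gaps.
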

\begin{proof}
The proof is straightforward given the explicit forms of ${\cal F},$ 
expressions (\ref{fcop}) and the generalized cocycle condition (\ref{gencoc}).
\end{proof}
Recall also that ${\cal V} = \sum_{x\in X} e_{x, \tau(x)},$ then we can express the $(n+1)$-twist as
\begin{equation}
{\cal F}_{01...n} = \sum_{x_1,...,x_{n+1} \in X} e_{x_1, x_1}
\otimes e_{\sigma(x_2), x_2} \otimes e_{\sigma^2(x_3), x_3} \otimes \ldots 
\otimes  e_{\sigma^n(x_{n+1}), x_{n+1}}. \nonumber
\end{equation}
We shall also need for the next Lemma the expression 
${\cal F}_{12...n0} ={\cal V}_0^{n}{\cal V}_{n}^{n-1} \ldots {\cal V}_3^{2} {\cal V}_2.$

Given the explicit form of the $(n+1)$-twist from Lemma \ref{lemmab1} we can show 
the factorization of the monodromy matrix in a straightforward manner in the following Lemma.
\begin{lemma}{}Let $T_{0,12..N} = r_{0N} ... r_{01},$
where $r$ is Lyubashenko's solution and ${\cal F}_{012..N}$, 
the $(N+1)$-twist as derived in Lemma \ref{lemmab1}, 
then $\ {\cal F}^{-1}_{12..N0}\ {\cal F}_{012...N} = T_{0,12...N}.$
\end{lemma}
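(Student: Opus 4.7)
The plan is to reduce the identity to an exponent count in the commuting operators $\mathcal{V}_k$. The only fact needed beyond what has already been set up in this subsection is that $\mathcal{V}_i$ and $\mathcal{V}_j$ act on distinct tensor factors of $V^{\otimes(N+1)}$ for $i\neq j$ and hence commute; once this is noted, both sides of the claim can be written as monomials in $\mathcal{V}_0,\mathcal{V}_1,\ldots,\mathcal{V}_N$ and compared directly.

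First I would compute the monodromy matrix in the Lyubashenko setting. From Proposition~\ref{prop1L} (equivalently from $r=\mathcal{F}_{21}^{-1}\mathcal{F}_{12}$ with $\mathcal{F}_{12}=\mathcal{V}_2$) a short calculation, using $\sigma\tau=\tau\sigma=\mathrm{id}$, gives $r_{0n}=\mathcal{V}_0^{-1}\mathcal{V}_n$ for each $n\in\{1,\ldots,N\}$. Gathering the $\mathcal{V}_0^{-1}$ factors using commutativity then yields
\begin{equation}
T_{0,12\ldots N}=r_{0N}\cdots r_{01}=\mathcal{V}_0^{-N}\,\mathcal{V}_1\mathcal{V}_2\cdots\mathcal{V}_N. \nonumber
\end{equation}

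Next I would invoke Lemma~\ref{lemmab1} together with the displayed formula $\mathcal{F}_{12\ldots N0}=\mathcal{V}_0^{N}\mathcal{V}_N^{N-1}\cdots\mathcal{V}_3^{2}\mathcal{V}_2$ recorded just before the statement, so that
\begin{equation}
\mathcal{F}_{012\ldots N}=\prod_{k=0}^{N}\mathcal{V}_k^{\,k}, \qquad \mathcal{F}_{12\ldots N0}=\mathcal{V}_0^{N}\prod_{k=2}^{N}\mathcal{V}_k^{\,k-1}. \nonumber
\end{equation}
Inverting the second term by term (justified by commutativity) and multiplying gives
\begin{equation}
\mathcal{F}^{-1}_{12\ldots N0}\,\mathcal{F}_{012\ldots N}=\mathcal{V}_0^{-N}\Big(\prod_{k=2}^{N}\mathcal{V}_k^{-(k-1)}\Big)\Big(\prod_{k=1}^{N}\mathcal{V}_k^{\,k}\Big)=\mathcal{V}_0^{-N}\,\mathcal{V}_1\mathcal{V}_2\cdots\mathcal{V}_N, \nonumber
\end{equation}
which matches the expression for $T_{0,12\ldots N}$ obtained above.

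The only place that requires any thought is pinning down the exponent $k-1$ on $\mathcal{V}_k$ inside $\mathcal{F}_{12\ldots N0}$: this is exactly the position of label $k$ in the reordered tuple $1,2,\ldots,N,0$, and is inherited from Lemma~\ref{lemmab1} via the generalised cocycle identity (\ref{gencoc}). Everything else is a bookkeeping exercise. A more structural alternative would be to feed the admissibility of $\mathcal{F}$ (available via Propositions~\ref{proco1}--\ref{cocycle} and their iteration) into Proposition~\ref{Drinfeld3}, which directly delivers a statement of the required form; but the direct computation above is shorter and makes transparent how the Reshetikhin-type, ``decoupled'' structure of the Lyubashenko twist collapses the general set-theoretic expression.
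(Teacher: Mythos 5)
Your proposal is correct and follows essentially the same route as the paper: a direct computation using $r_{0n}=\mathcal{V}_0^{-1}\mathcal{V}_n$, the explicit monomial forms of $\mathcal{F}_{012\ldots N}$ and $\mathcal{F}_{12\ldots N0}$ from Lemma \ref{lemmab1}, and the commutativity of the $\mathcal{V}_k$ on distinct tensor factors. The only cosmetic difference is that the paper verifies $\mathcal{F}_{12\ldots N0}\,T_{0,12\ldots N}\,\mathcal{F}^{-1}_{012\ldots N}=I^{\otimes(N+1)}$ whereas you compute $\mathcal{F}^{-1}_{12\ldots N0}\,\mathcal{F}_{012\ldots N}$ and match it to $T_{0,12\ldots N}$, which is trivially equivalent.
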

\begin{proof}
The factorization can be checked by direct computation (recall in this case, $r_{0n} ={\cal V}_0^{-1} {\cal V}_n$), indeed
\begin{eqnarray}
&& {\cal F}_{12..N0}\ T_{0,12..N}\ {\cal F}^{-1}_{012...N} =\nonumber\\
& & \Big ( {\cal V}_0^{N}{\cal V}_{N}^{N-1} \ldots {\cal V}_3^{2} {\cal V}_2\Big )\Big ({\cal V}_{0}^{-1} 
{\cal V}_{N} \ldots {\cal V}_0^{-1} {\cal V}_{1} \Big ) \Big ({\cal V}_N^{N}{\cal V}_{N-1}^{N-1} 
\ldots {\cal V}_2^{2} {\cal V}_1 \Big )^{-1} =  I^{\otimes (N+1)}. \nonumber
\end{eqnarray}
\end{proof}

\begin{rem}
There is an alternative admissible twist for Lyubashenko's solutions. 
Indeed, ${\cal G}_{12} ={\cal V}_1^{-1},$ is an alternative 
admissible, Reshetikhin type twist:
\begin{enumerate}
\item ${\cal G}_{12,3} = {\cal G}_{13} {\cal G}_{23}$ and ${\cal G}_{1,23} = {\cal G}_{13}{\cal G}_{12}$. 
\item ${\cal G}_{0,12...n} ={\cal V}_0^{-n}$ and ${\cal G}_{12..n,0} = {\cal V}^{-1}_1 {\cal V}^{-1}_2 \ldots {\cal V}_n^{-1} .$ 
\item Via the generalized co-cycle condition the alternative $(n+1)$-twist is expressed as 
${\cal G}_{01...n} =\prod_{k=0}^{n} {\cal V}_k^{-(n-k)} $.
\end{enumerate}
\end{rem}
Recalling also that ${\cal V}^{-1} = \sum_{x\in X} e_{x, \sigma(x)}$ we can write
\begin{equation}
{\cal G}_{01...n} = \sum_{x_1,...,x_{n+1} \in X} e_{\tau^n(x_1),x_1} \otimes e_{\tau^{n-1}(x_2), x_2} \otimes \ldots  \otimes 
e_{\tau(x_n), x_n}\otimes e_{x_{n+1}, x_{n+1}}. \nonumber
\end{equation}

\begin{rem}
The set-theoretic solution (\ref{brace1}) is a representation of the Hecke algebra $H_N(q=1)$ \cite{DoiSmo1}, 
hence the Baxterized solution can be expressed as in Remark 3.12. A $q$-analogue of Lyubashenko’s solution was derived 
in \cite{DoiSmo2}  via the  simple twist ${\cal V}= \sum_{x\in X} e_{\sigma(x), x}$.  Indeed, if $g$ is the ${\mathfrak U}_{q}(\mathfrak{gl}_{\cal N})$ invariant 
representation of the Hekce algebra $H_{N}(q)$ \cite{Jimbo, Jimbo2} then so $g_{t} = ({\cal V}\otimes I) g ({\cal V}^{-1}\otimes I)$ is 
 provided that $sgn(x-y) = sgn(\tau(x) - \tau(y)) =
 sgn(\sigma(x) - \sigma(y))$ $\forall x,y \in X$ (note also that $[{\cal V} \otimes {\cal V}, g_t] =0$). Then the Baxterized solution 
$\check R_t= e^{\lambda} g_t– e^{-\lambda}g_t^{-1}$ gives a twisted $\check R$-matrix associated to the 
${\mathfrak U}_{q}(\mathfrak{gl}_{\cal N})$. The more general twist of Proposition 3.10  can be also applied to 
the ${\mathfrak U}_{q}(\mathfrak{gl}_{\cal N})$  invariant  solution $g$  of the braid relation, subject to certain 
constraints, however this case will be discussed in detail elsewhere.
\end{rem}

\subsection*{Curious observations.} 
This is a preliminary discussion motivating the next natural steps of our investigation.
Throughout  this manuscript we have been focused on the  identification of set-theoretic 
twists  and the issue of their admissibility in the sense of Proposition \ref{cocycle}.
We have restricted our attention on the co-product structure of the underlying quantum algebra and 
we have not discussed the actions of  the co-unit and antipode for set-theoretic solutions, which
are crucial  in identifying the quantum algebra as a quasi-triangular Hopf algebra.
 
Let us  briefly refer to the notions of co-unit and antipode for the quantum algebra 
associated to Lyubashenko's solution, which represents a simple, but characteristic 
example  of involutive, set-theoretic solution.  Recall the Lyubashenko solution can 
be expressed in the compact way $r: V \otimes V \to V \otimes V$ ($V$ is the ${\cal N}$-dim vector space), such that
$r = {\cal V}^{-1} \otimes {\cal V}$, where ${\cal V} = \sum_{x\in X} e_{x, \tau(x)}$ is a group like element. 
From the definition $ T_{0,12} = r_{02} r_{01}$ and recalling the simple form of Lyubashenko's solution 
we have $\Delta({\cal V}) = {\cal V}\otimes {\cal V},$ indeed ${\cal V}$ is a group like element.  We define the co-unit:
$ \epsilon({\cal V})=  \epsilon({\cal V})^{-1}=\epsilon(I) = 1,$ 
then it follows that
$(\epsilon \otimes \mbox{id})\Delta({\cal V})=(\mbox{id}\otimes \epsilon)\Delta({\cal V})={\cal V}.$ We also define 
the antipode: $s({\cal V}) ={\cal  V}^{-1},$ then $m \big ( (s\otimes \mbox{id})\Delta({\cal V})\big )=
m \big ( ( \mbox{id} \otimes s)\Delta({\cal V})\big ) = I_V.$ 

We recall the Lyubashenko twists:  ${\cal F} = I \otimes {\cal V}$ and ${\cal G} = {\cal V}^{-1} \otimes I$,  
we can then readily check, regardless the above defined action of the co-unit, that 
$(\epsilon \otimes\mbox{id}) {\cal F} = {\cal V},$ 
$\ (\mbox{id} \otimes \epsilon) {\cal G} = {\cal V}^{-1}$ and 
via $\epsilon({\cal V})=  \epsilon({\cal V}^{-1})=1,$  we conclude $(\mbox{id} \otimes \epsilon) {\cal F} = I,$ $\ (\epsilon \otimes\mbox{id}) {\cal G} = I.$ 
Even though the conditions $(\epsilon \otimes\mbox{id}) {\cal F} = I,$ $(\mbox{id} \otimes \epsilon){\cal G} =I$ 
are now relaxed,  the twists ${\cal F},\ {\cal G}$  are still  admissible in the weaker sense of  Proposition \ref{cocycle} and Corollary \ref{twistglobal}.
It is also interesting to  present the action of the above defined co-unit on Lyubashenkso's $r$ matrix: $(\epsilon \otimes\mbox{id})r = {\cal V},$ 
$\ (\mbox{id} \otimes \epsilon)r = {\cal V}^{-1}$.
This is an ``uncommon’’ action of the co-unit, the origin of which is the fact that both the set-theoretic 
$r$-matrix as well as the related twists have no semi classical analogues, i.e. they can not be expressed  
as formal series expansions with leading term being the identity.

Similar observations can be made for the Baxterized solutions coming from the Lyubashenko $r$-matrix.
Let $\tilde R(\lambda) = \lambda I_{V\otimes V} + {\cal P}$ be the Yangian $R$-matrix where ${\cal P} = \sum_{x, y\in X} 
e_{x,y} \otimes e_{y,x}$ is the permutation operator and  
$e_{x, y}$ are the generators of $\mathfrak{gl}_{\cal N}$ in the ${\cal N}$-dimensional representation.
We focus for the sake of simplicity on the finite $\mathfrak{gl}_{\cal N}$ subalgebra of the Yangian; 
$\mathfrak{gl}_{\cal N}$ is a Hopf algebra with counit $\epsilon(e_{x,y}) =0$, $\epsilon(I) =1$, antipode $s(e_{x,y}) = - e_{x,y}$ 
and co-product  $\Delta(e_{x,y})= e_{x,y} \otimes I + I \otimes e_{x,y}.$ 
Moreover, $ \Delta(e_{x, y}) \tilde R(\lambda) = \tilde R(\lambda)  \Delta(e_{x, y}),$ $\forall x,y \in X$,
recalling that $R(\lambda) = {\cal V}_2 \tilde R_{12}(\lambda) {\cal V}_{1}^{-1} = {\cal V}_1^{-1} \tilde R(\lambda) {\cal V}_{2},$
we conclude that $\Delta_j^{(op)}(e_{x,y}) R(\lambda) =R(\lambda) \Delta_j(e_{x,y}),$ $j \in \{1,\ 2\},$ where the
two types of twisted co-products are defined as \cite{DoiSmo2}: $\Delta_1(e_{x,y}) =  e_{x,y} \otimes  I + I \otimes  e_{\tau(x) ,\tau(y)}$ and 
$\Delta_2(e_{x,y}) = e_{\sigma(x), \sigma(y)} \otimes I +I  \otimes e_{x,y},$
($\Delta_2(e_{x, y}) = {\cal V} \otimes {\cal V} \cdot \Delta_1(e_{x, y}) \cdot  {\cal V}^{-1} \otimes {\cal V}^{-1}$). 

Given the above twisted co-products it follows that  
$(\epsilon \otimes \mbox{id})\Delta_1(e_{x,y}) = e_{\tau(x), \tau(y)},$ 
$\ (\mbox{id} \otimes \epsilon)\Delta_1(e_{x,y}) = e_{x,y}$  and 
$(\epsilon \otimes \mbox{id})\Delta_2(e_{x,y}) =e_{x,y},$ 
$\ (\mbox{id} \otimes \epsilon)\Delta_1(e_{x,y}) = e_{\sigma(x),\sigma(y)}$ 
i.e. the twisted co-products  and co-unit do not satisfy the bialgebra axioms. 
In addition, co-associativity for the deformed co-products is also an issue,
indeed it is easily shown using the explicit expressions for the co-products $\Delta_j$: 
$(\mbox{id} \otimes \Delta_1)\Delta_1(e_{x,y}) = 
\Phi^{-1} \cdot (\Delta_1 \otimes  \mbox{id})\Delta_1(e_{x,y})\cdot \Phi$ and 
$(\mbox{id} \otimes \Delta_2)\Delta_2(e_{x,y}) = \hat \Phi^{-1}\cdot (\Delta_2 \otimes  \mbox{id})\Delta_2(e_{x,y})\cdot \hat \Phi$ 
where we define $\Phi = I \otimes I \otimes {\cal V}$ and $\hat \Phi = {\cal V}\otimes I \otimes I$, 
i.e. the co-associativity is now reduced to an almost co-associativity.
Although after applying  the simple twists the underlying algebra is still $\mathfrak{gl}_{\cal N}$ \cite{DoiSmo2}, 
strict  co-associativity as well as the axioms  of the bialgebra involving the co-multiplication and co-unit are not satisfied anymore.

We have briefly demonstrated the intriguing problem of characterizing 
the quantum group associated to set-theoretic solutions as a bialgebra, 
using the simple example of Lyubashenko’s solution. The problem for the general set-theoretic 
solutions of the Yang-Baxter equation
 is studied in \cite{DoikouVlar}.
The general case, also in relation to the Yangian,  as well as the notion of the
quantum double will be addressed  in detail elsewhere.

\subsection*{Acknowledgments}

\noindent  I am grateful to A. Smoktunowicz for reading the manuscript and for useful discussions and comments.
Support from the EPSRC research grants EP/R009465/1  and EP/V008129/1 is also acknowledged.

\end{document}